\newtheorem{theorem}{Theorem}
\newtheorem{lemma}[theorem]{Lemma}
\newtheorem{corollary}[theorem]{Corollary}
\begin{document}
%
\title{Constructions of Snake-in-the-Box Codes under the $\ell_{\infty}$-Metric for Rank Modulation}
%
%

\author{Xiang~Wang
        and~Fang-Wei~Fu
\thanks{X. Wang and F.-W. Fu are with the Chern Institute of Mathematics and LPMC, Nankai
University, Tianjin 300071, China (e-mail: xqwang@mail.nankai.edu.cn; fwfu@nankai.edu.cn).}
}

%



\maketitle

\begin{abstract}
In the rank modulation scheme, Gray codes are very useful in the realization of flash memories. For a Gray code in this scheme, two adjacent codewords are obtained by using one ``push-to-the-top'' operation. Moreover, snake-in-the-box codes under the $\ell_{\infty}$-metric are Gray codes, which can be capable of detecting one $\ell_{\infty}$-error. In this paper, we give two constructions of $\ell_{\infty}$-snakes. On the one hand, inspired by Yehezkeally and Schwartz's construction, we present a new construction of the $\ell_{\infty}$-snake. The length of this $\ell_{\infty}$-snake is longer than  the length of the $\ell_{\infty}$-snake constructed by Yehezkeally and Schwartz. On the other hand, we also give another construction of $\ell_{\infty}$-snakes by using $\mathcal{K}$-snakes and obtain the longer $\ell_{\infty}$-snakes than the previously known ones.
\end{abstract}

\begin{IEEEkeywords}
 Flash memory, rank modulation, Gray codes, snake-in-the-box codes, $\mathcal{K}$-snakes, $\ell_{\infty}$-snakes.
\end{IEEEkeywords}

%
\IEEEpeerreviewmaketitle

\section{Introduction}
\label{sec1}
\IEEEPARstart{F}{lash} memory is a non-volatile storage medium that is both electrically programmable and erasable. It has been widely used because of its reliability, relative low cost, and high storage density. In flash memories, a block which contains many cells can maintain a block of charge levels to represent information. However, the flash memory has its inherent asymmetry between cell programming (injecting cells with charge) and  cell erasure (removing charge from cells). That is to say, increasing the charge level of a single cell (cell programming) is an easy operation, but decreasing the charge level of a single cell (cell erasure) is a very difficult process. In the current flash memories' architecture, a single-cell erasure operation includes copying a large whole block which contains the single cell to a temporary location, erasing it, and then reprogramming all the cells in the block. In the programming operation, some cells may be injected with extra charge. This will lead to overshooting of charge. Then, overprogramming (overshooting of charge) is a severe problem because of some very difficult cell erasure operations. Thus, in order to avoiding overprogramming, injecting a cell with charge by many iterations makes its charge level gradually approach the desirable level in the cell programming operation. Due to charge leakage or reading disturbance, these may result in some errors in flash memory cells.

The rank modulation scheme has been recently proposed in \cite{Jiang1} to overcome these outstanding problems. In this scheme, one permutation is induced by relative rankings of the charge levels on a group of cells instead of using absolute values of charge levels. Moreover, this permutation is used to represent information. Specifically, assume that $c_1,c_2,...,c_n\in\mathbb{R}$ represent the charge levels of $n\in \mathbb{N}$ cells respectively, then the charge levels of these cells induce one permutation $\pi=[\pi(1),...,\pi(n)]\in S_n$ such that $c_{\pi(1)}>c_{\pi(2)}>\cdot\cdot\cdot>c_{\pi(n)}$, where $S_n$ is the set of all the permutations over $\{1,2,...,n\}$. In order to avoid the overprogramming, the cell programming uses only ``push-to-the-top'' operations \cite{Jiang1}. That is, a cell is programmed by raising the charge level of this cell above those of all others in the block. Therefore, in the manner, the overprogramming is no longer a problem. Moveover, when injection of some extra charge or leakage may not change their relative rankings, then the permutation induced by their relative rankings will not change. Hence, this will not cause an encoding error. However, when the relative rankings are changed because of injection of much extra charge or leakage in the cells, the permutation induced by the relative rankings will be different from the desired permutation, i.e., this leads to an encoding error. To detect and/or correct such errors, we require an appropriate distance measure in the permutations. There are several metrics on the permutations such as the $\ell_{\infty}$-metric \cite{Klve}, \cite{Tamo}, the Ulam metric \cite{Farnoud}, and the Kendall's $\tau$-metric \cite{Jiang2}, \cite{Barg}, \cite{Mazumdar}. In this paper, we will consider only the $\ell_{\infty}$-metric.

In the rank modulation scheme, Gray codes are important codes which represent information in flash memories. In \cite{Jiang1}, Jiang \emph{et al.} proposed the Gray codes by using ``push-to-the-top'' operations. For the Gray code, it was first presented in \cite{Gray}, where it is a sequence of distinct binary vectors of fixed length and every adjacent pair differs in a single coordinate. In practice, they are widely used in many applications \cite{Savage1}. Moreover, a good survey on the Gray codes is given in \cite{Savage2}. Recently, Gray codes for rank modulation have been studied in \cite{Jiang2, Yehezkeally, Gad1, Gad2}. In addition, a snake-in-the-box code is a Gray code in which the distance of any two distinct codewords in the code is at least $2$. Thus, this code can detect a single error in one codeword. In particular, the snake-in-the-box codes are usually studied in the context of binary codes in the Hamming scheme, e.g.\cite{Abbott}. In this paper, we will only focus on the snake-in-the-box codes under the $\ell_{\infty}$-metric.

In \cite{Yehezkeally}, Yehezkeally and Schwartz constructed directly a snake-in-the-box code of length $\lceil\frac{n}{2}\rceil!(\lfloor\frac{n}{2}\rfloor+(\lfloor\frac{n}{2}\rfloor-1)!)$ in $S_{n}$ under the $\ell_{\infty}$-metric. In this paper, we will improve on this result. On the one hand, we will construct a snake of length $\lceil\frac{n}{2}\rceil!(\lfloor\frac{n}{2}\rfloor+(\lfloor\frac{n}{2}\rfloor)!)$ in $S_{n}$. On the other hand, we will also construct the longer $\ell_{\infty}$-snakes by using $\mathcal{K}$-snakes.

The rest of this paper is organized as follows. In Section \ref{sec2}, we will give some basic definitions for the rank modulation scheme and notations required in this paper. In Section \ref{sec3}, we give directly two constructions of $\ell_{\infty}$-snakes in $S_n$. In Section \ref{sec4}, we present some examples of these two constructions. In Section \ref{sec5}, we compare our results with the previous ones. Section \ref{sec6} concludes this paper.

\section{Preliminaries}
\label{sec2}
In this section, we will use some definitions and notations mentioned in \cite{Horvitz} and \cite{Buzaglo}.

We let $[n]\triangleq\{1,2,...,n\}$ and let $\pi\triangleq[\pi(1),\pi(2),...,\pi(n)]$ be a \emph{permutation} over $[n]$. And let $S_n$ be the set of all the permutations over $[n]$. For $\sigma,\pi \in S_n$, their multiplication $\pi\circ\sigma$ is denoted by the composition of $\sigma$ on $\pi$, i.e., $\pi\circ\sigma(i)=\sigma(\pi(i))$, for all $i\in [n]$. Under this multiplication operation, $S_n$ is a noncommunicative \emph{group}. Moreover, let $\pi^{-1}$ be the \emph{inverse} element of $\pi$, for $\pi\in S_n$, and let $A_n$ be the subgroup of all \emph{even} permutations over $[n]$.

Assume that given a set $\mathcal{S}$ and a set of transformations $T\subset\{f|f:\mathcal{S}\rightarrow\mathcal{S}\}$, a \emph{Gray code} over $\mathcal{S}$ of size $M$, is a sequence $C=(c_0,c_1,...,c_{M-1})$ of $M$ different elements from $\mathcal{S}$, called \emph{codewords}, in which for each $i\in[M-1]$ there exists some $\tilde{t}_i\in T$ such that $c_i=\tilde{t}_i(c_{i-1})$. For convenience, we denote a \emph{transformation sequence} of the Gray code $C$ by $\mathcal{T}_C$, i.e., $\mathcal{T}_C=(\tilde{t}_1,\tilde{t}_2,...,\tilde{t}_{M-1})$. The Gray code is called \emph{complete} if $M=|\mathcal{S}|$, and \emph{cyclic} if there exists $\tilde{t}_{M}\in T$ such that $c_0=\tilde{t}_{M}(c_{M-1})$. Thus, the transformation sequence $\mathcal{T}_C$ of the cyclic Gray code $C$ is $(\tilde{t}_1,\tilde{t}_2,...,\tilde{t}_{M-1},\tilde{t}_{M})$.

Consider the Gray codes for rank modulation in flash memories, we have $\mathcal{S}=S_n$ and the set of transformations comprises of all the ``push-to-the-top'' operations in $S_n$, defined by $T_n$. Next, we denote by $t_i:S_n\rightarrow S_n$ one ``push-to-the-top'' operation on index $i$, for $2\leq i\leq n$, that is,
\begin{align}
t_i[a_1,a_2,....,a_{i-1},a_i,a_{i+1},...,a_n]=[a_i,&a_1,a_2,...,a_{i-1},a_{i+1},...,a_n],\nonumber
\end{align}
and a p-transition will be an abbreviation of a ``push-to-the-top'' operation. Therefore, $T_n=\{t_2,t_3,...,t_n\}$.

A sequence of p-transitions is called a \emph{transition sequence}. Given an initial permutation $\pi_0$ in $S_n$ and a transition sequence $(t_{x(1)},t_{x(2)},...,t_{x(l)})$ with $x(i)\in[n]$ for all $i\in[l]$, we can obtain a sequence of permutations $\pi_0,\pi_1,...,\pi_l$ in $S_n$, where $\pi_i=t_{x(i)}(\pi_{i-1})$ for all $i\in[l]$. When $\pi_l=\pi_0$ and $\pi_i\neq\pi_j$ for each pair $0\leq i<j<l$, the permutation sequence $(\pi_0,\pi_1,...,\pi_{l-1})$ is a cyclic Gray code by using the ``push-to-the-top'' operations, denoted by $C_n$. Moreover, the transition sequence $\mathcal{T}_{C_n}$ is $(t_{x(1)},t_{x(2)},...,t_{x(l)})$. For convenience, for a transition sequence $\mathcal{T}$ in $T_n$, we denote by $f:\{1,2,...,|\mathcal{T}|\}\rightarrow\{2,3,...,n\}$ one index function of the transition sequence. Hence, $\mathcal{T}=(t_{f(1)},t_{f(2)},...,t_{f(|\mathcal{T}|)})$.

The $\ell_{\infty}$-distance between two permutations $\pi,\sigma\in S_n$, denoted by $d_{\infty}(\pi,\sigma)$, is the maximal number of indices difference between $\pi$ and $\sigma$. A snake-in-the-box code $C$ under the $\ell_{\infty}$-metric is a Gray code in which for each two distinct permutations $\pi,\sigma\in C$, we have that $d_{\infty}(\pi,\sigma)\geq2$. Moreover, a cyclic $\ell_{\infty}$-snake is a cyclic Gray code. If not special specified, we call a cyclic $\ell_{\infty}$-snake an $\ell_{\infty}$-snake or a snake.

Furthermore, we denote by an $(n,M,\ell_{\infty})$-snake an $\ell_{\infty}$-snake of size $M$ in $S_n$. And we let $C_{\mathcal{T}_{C}}^{\pi_0}$ be an $(n,M,\ell_{\infty})$-snake, where $\mathcal{T}_C$ is its transition sequence and $\pi_0$ is its first permutation. For simplicity, we let $C_{\mathcal{T}_{C}}^{\pi_0}\triangleq(\pi_0,\pi_1,\pi_2,...,\pi_{M-1})$ and $\mathcal{T}_{C}\triangleq(t_{x(1)},t_{x(2)},...,t_{x(M)})$ such that $\pi_i=t_{x(i)}(\pi_{i-1})$ for every $i\in [M-1]$ and $t_{x(M)}(\pi_{M-1})=\pi_0$.

In \cite{Jiang1}, Jiang \emph{et al.} presented an $n$-length rank modulation Gray code ($n$-RMGC) for flash memories in the rank modulation scheme. For convenience, we denote a cyclic and complete $n$-RMGC by $C_{\mathcal{T}_{n}}$, where $\mathcal{T}_{n}$ is its transition sequence. Hence, $|C_{\mathcal{T}_{n}}|=n!$. And we define $\mathcal{T}_{n}\triangleq(t_{i_{n}(1)},t_{i_{n}(2)}...,t_{i_{n}(n!)})$. In \cite{Tamo}, Tamo and Schwartz discussed error-correcting and error-detecting codes in $S_n$ under the $\ell_{\infty}$-metric, called as the limited-magnitude rank-modulation codes (LMRM codes). Moreover, in \cite{Tamo}, they also proved that if $C$ is an $(n,M,\ell_{\infty})$-snake, then $M\leq \frac{n!}{2^{\lfloor n/2\rfloor}}$. Later Yehezkeally and Schwartz \cite{Yehezkeally} constructed an $(n,M,\ell_{\infty})$-snake of size $\lceil\frac{n}{2}\rceil!(\lfloor\frac{n}{2}\rfloor+(\lfloor\frac{n}{2}\rfloor-1)!)$ for all $n\geq 4$.

Having the above definitions and notations, we will present two constructions of $\ell_{\infty}$-snakes in the following section.

\section{Main results}
\label{sec3}
According to the definition of the $\ell_{\infty}$-distance, for any two permutations $\sigma,\pi\in S_n$, we have the following expression for $d_{\infty}(\sigma,\pi)$ \cite{Yehezkeally},
\begin{equation}
d_{\infty}(\sigma,\pi)=\max\limits_{i\in[n]}|\sigma(i)-\pi(i)|\nonumber.
\end{equation}

\subsection{Construction of $\ell_{\infty}$-snakes by using cyclic and complete RMGCs}
\label{subsec1}
In this subsection, we give one construction of $\ell_{\infty}$-snakes by using cyclic and complete RMGCs. In order to use the code constructions presented in \cite{Jiang1}, we will give the following lemma.

\begin{lemma}\cite[Theorems 7 and 8]{Jiang1}
\label{Lm 1}
For all $n\geq 3$, there exists a cyclic and complete $(n-1)$-RMGC, denoted by $C_{\mathcal{T}_{n-1}}$, where $\mathcal{T}_{n-1}\triangleq(t_{i_{n-1}(1)},t_{i_{n-1}(2)},...,t_{i_{n-1}((n-1)!)})$. Moreover, for all $n\geq 4$, the constructions in \cite[Theorem 7]{Jiang1} can yield a cyclic and complete $n$-RMGC, denoted by $C_{\mathcal{T}_{n}}$, with its transition sequence $\mathcal{T}_{n}=(t_{i_{n}(1)},t_{i_{n}(2)}...,t_{i_{n}(n!)})$, where
\begin{align*}
\mathcal{T}_n=(\underbrace{t_n,...,t_n}_{n-1},t_{\hat{i}_{n-1}(1)},\underbrace{t_n,...,t_n}_{n-1},&t_{\hat{i}_{n-1}(2)},...,
\underbrace{t_n,...,t_n}_{n-1},t_{\hat{i}_{n-1}((n-1)!)})
\end{align*}
and
\begin{equation}
t_{\hat{i}_{n-1}(j)}=t_{n-i_{n-1}(j)+1}\quad \text{for all $j\in[(n-1)!]$}.\nonumber
\end{equation}
\end{lemma}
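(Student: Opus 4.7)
The plan is to prove the statement by induction on $n$, which is the approach of Jiang \emph{et al.} \cite{Jiang1}. The base case for the first claim at $n=3$ is handled by $\mathcal{T}_2 = (t_2, t_2)$ on $S_2$, and the second claim at $n=4$ can be verified directly by tracing the $24$ permutations of $S_4$ through the prescribed transitions, using as $\mathcal{T}_3$ any cyclic complete $3$-RMGC such as $(t_3, t_3, t_2, t_3, t_3, t_2)$.

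For the inductive step at $n \geq 4$, I would assume the existence of a cyclic complete $(n-1)$-RMGC $C_{\mathcal{T}_{n-1}}$ and verify three properties of the interleaved sequence. First, within each block of $(n-1)$ consecutive applications of $t_n$, the $n$ visited permutations are distinct, since $(t_n)^k$ applied to $[a_1, \ldots, a_n]$ yields the cyclic shift $[a_{n-k+1}, \ldots, a_n, a_1, \ldots, a_{n-k}]$ for $0 \leq k \leq n-1$. Second, distinct blocks visit disjoint permutations, which follows because each block is indexed by a different permutation of the ``effective'' lower-order substructure, and these run through all $(n-1)!$ possibilities by the completeness of $C_{\mathcal{T}_{n-1}}$. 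Third, after $(n-1)!$ blocks together with their connecting transitions, the total count is $n \cdot (n-1)! = n!$ and the sequence returns to $\pi_0$ by the cyclicity of $C_{\mathcal{T}_{n-1}}$.

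The key technical content is the identity $t_{\hat{i}_{n-1}(j)} = t_{n - i_{n-1}(j) + 1}$, which translates a transition index from the embedded $(n-1)$-RMGC into the correct index for the ambient $S_n$. This index reflection arises because the block $(t_n)^{n-1}$ permutes the positions of the lower-order elements in a predictable way; one can verify the formula by computing, for a generic permutation, both the state after applying $(t_n)^{n-1}$ followed by $t_{n - i_{n-1}(j) + 1}$ in $S_n$ and the state expected from applying $t_{i_{n-1}(j)}$ at the $(n-1)$-level, and checking that they agree.

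The main obstacle is not conceptual but notational: the verification requires meticulous bookkeeping of how positions and elements transform under alternating push-to-the-top operators, and in particular pinning down the index-reflection formula. This is carried out in full detail in \cite[Theorems 7 and 8]{Jiang1}, and the present lemma is simply a restatement of their construction in the notation of this paper.
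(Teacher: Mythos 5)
Your proposal is correct and takes essentially the same route as the paper: the paper offers no proof of this lemma at all, treating it purely as a restatement of the cited result \cite[Theorems 7 and 8]{Jiang1}, and your sketch (induction via blocks of $n-1$ applications of $t_n$, disjointness of blocks as distinct cyclic-order classes from the complete $(n-1)$-RMGC, and the index-reflection $t_{\hat{i}_{n-1}(j)}=t_{n-i_{n-1}(j)+1}$ for the connecting transitions) is a faithful outline of exactly that cited construction. Since you likewise defer the detailed bookkeeping to \cite{Jiang1}, there is no substantive difference between your treatment and the paper's.
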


According to the above lemma, we can obtain some properties of this RMGC which we will later use.

\begin{lemma}
\label{Lm 2}
For any $n\geq 3$, there exists a cyclic and complete $n$-RMGC, denoted by $C_{\mathcal{T}_{n}}$, where its transition sequence $\mathcal{T}_{n}=(t_{i_{n}(1)},...,t_{i_{n}(n!)})$ such that
\begin{equation}
t_{i_{n}(j)}=t_2,~t_{i_{n}(k)}=t_{n-1},~\text{and}~ t_{i_{n}(l)}=t_n \nonumber
\end{equation}
for some $j,k,l\in[n!]$.
\end{lemma}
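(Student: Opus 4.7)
The plan is induction on $n\geq 3$, leveraging the recursive structure of $\mathcal{T}_n$ given in Lemma~\ref{Lm 1}. Since the sequence $\mathcal{T}_n$ explicitly contains the block $\underbrace{t_n,\ldots,t_n}_{n-1}$, the transition $t_n$ appears immediately in $\mathcal{T}_n$, and the task reduces to verifying that both $t_2$ and $t_{n-1}$ occur among the ``glue'' transitions $t_{\hat{i}_{n-1}(j)}$, $j\in[(n-1)!]$, separating the $t_n$-blocks.

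The key observation is that the index map $i\mapsto n-i+1$ defining $\hat{i}_{n-1}$ is an involution on $\{2,\ldots,n\}$ that interchanges $2$ with $n-1$. Consequently, $t_{\hat{i}_{n-1}(j)}=t_{n-1}$ exactly when $i_{n-1}(j)=2$, and $t_{\hat{i}_{n-1}(j)}=t_2$ exactly when $i_{n-1}(j)=n-1$. Thus, to produce $t_2$ and $t_{n-1}$ in $\mathcal{T}_n$, it suffices to show that $t_{n-1}$ and $t_2$ occur in $\mathcal{T}_{n-1}$; this is precisely the content of the inductive hypothesis at level $n-1$.

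For the base case $n=3$, I would verify by direct computation. Since $|S_2|=2$ forces $\mathcal{T}_2=(t_2,t_2)$, applying the formula $t_{\hat{i}_2(j)}=t_{4-i_2(j)}$ yields $\mathcal{T}_3=(t_3,t_3,t_2,t_3,t_3,t_2)$, which plainly contains $t_2$, $t_{n-1}=t_2$, and $t_n=t_3$. The inductive step from $n-1\geq 3$ to $n$ then proceeds by the argument above.

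I do not anticipate any substantive obstacle: the proof is essentially bookkeeping on the reversing index map, together with the remark that the hypothesis at level $n-1$ furnishes exactly the two transitions ($t_2$ and $t_{n-1}$) whose images under $i\mapsto n-i+1$ are the required $t_{n-1}$ and $t_2$ at level $n$, while $t_n$ comes for free from the explicit $t_n$-blocks in the construction.
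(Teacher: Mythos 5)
Your proposal is correct and follows essentially the same route as the paper: induction on $n$ via the recursion of Lemma~\ref{Lm 1}, with $t_n$ supplied by the explicit $t_n$-blocks and the reversing map $i\mapsto n-i+1$ exchanging the $t_2$ and $t_{n-1}$ guaranteed at level $n-1$ (the paper works out $n=3,4,5$ explicitly and then invokes this same induction, additionally tracking the positions $n$, $n^2-n$, and $1$). One minor caution: Lemma~\ref{Lm 1} states the recursion only for $n\geq 4$, so the base case $\mathcal{T}_3=(t_3,t_3,t_2,t_3,t_3,t_2)$ should be cited directly from \cite[Fig.~2]{Jiang1} (as the paper does) rather than derived from $\mathcal{T}_2$; since your sequence coincides with it, nothing breaks.
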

\begin{IEEEproof}
According to the construction of \cite[Fig. 2]{Jiang1}, we have one transition sequence of a cyclic and complete $3$-RMGC, denoted by $\mathcal{T}_{3}$, where $\mathcal{T}_{3}=(t_3,t_3,t_2,t_3,t_3,t_2)$. Hence, we have that
\begin{equation}
t_{{i}_{3}(1)}=t_3~\text{and}~t_{{i}_{3}(3)}=t_{{i}_{3}(6)}=t_2.\label{Eq113}
\end{equation}
By Lemma \ref{Lm 1}, when $n=4$, we have a cyclic and complete $4$-RMGC, denoted by $C_{\mathcal{T}_{4}}$, with
its transition sequence $\mathcal{T}_{4}=(t_{i_4(1)},...,t_{i_{4}(4!)})$, where
\begin{equation}
\mathcal{T}_{4}=(\underbrace{t_4,...,t_4}_{3},t_{{\hat{i}}_{3}(1)},...,
\underbrace{t_4,...,t_4}_{3},t_{{\hat{i}}_{3}(3!)}).\label{Eq 1}
\end{equation}
By Lemma \ref{Lm 1} and $(\ref{Eq113})$, we have that
\begin{equation}
t_{{\hat{i}}_{3}(1)}=t_2~\text{and}~t_{{\hat{i}}_{3}(3)}=t_3.\label{Eq 2}
\end{equation}
Hence, by $(\ref{Eq 1})$ and $(\ref{Eq 2})$, we can obtain that
\begin{equation}
t_{{i}_{4}(4)}=t_2,~t_{{i}_{4}(12)}=t_3,~\text{and}~t_{i_{4}(1)}=t_4.\label{Eq 3}
\end{equation}
By Lemma \ref{Lm 1}, we can also get a cyclic and complete $5$-RMGC, denoted by $C_{\mathcal{T}_{5}}$, with its transition sequence $\mathcal{T}_{5}=(t_{i_{5}(1)},...,t_{i_{5}(5!)})$, where
\begin{equation}
\mathcal{T}_{5}=(\underbrace{t_5,...,t_5}_{4},t_{{\hat{i}}_{4}(1)},...,\underbrace{t_5,...,t_5}_{4},t_{{\hat{i}}_{4}(4!)})\nonumber
\end{equation}
and
\begin{equation}
t_{\hat{i}_{4}(j)}=t_{5-i_{4}(j)+1}\quad \text{for all $j\in[4!]$}.\label{Eq 4}
\end{equation}
By $(\ref{Eq 3})$ and $(\ref{Eq 4})$, we have that
\begin{equation}
t_{i_{5}(20)}=t_{\hat{i}_{4}(4)}=t_{5-i_{4}(4)+1}=t_4\nonumber
\end{equation}
and
\begin{equation}
t_{i_{5}(5)}=t_{\hat{i}_{4}(1)}=t_{5-i_{4}(1)+1}=t_2\nonumber.
\end{equation}
Hence, we have that
\begin{equation}
t_{{i}_{5}(5)}=t_2,~t_{{i}_{5}(20)}=t_4,~\text{and}~t_{i_{5}(1)}=t_5.\nonumber
\end{equation}

Similarly, by induction, we can obtain that there exists a cyclic and complete $n$-RMGC, denoted by $C_{\mathcal{T}_{n}}$, with its transition sequence $\mathcal{T}_{n}=(t_{i_{n}(1)},...,t_{i_{n}(n!)})$ such that
\begin{equation}
t_{{i}_{n}(n)}=t_2,~t_{{i}_{n}(n^2-n)}=t_{n-1},~\text{and}~t_{i_{n}(1)}=t_n\nonumber
\end{equation}
for all $n\geq 3$.
\end{IEEEproof}

The following lemma gives one construction of a basic block which is useful for the construction of $\ell_{\infty}$-snakes by using cyclic and complete RMGCs.

\begin{lemma}
\label{Lm 3}
For all $n\geq 6$, let $\{a_{j}\}_{j=1}^{k}$ be a set of even integers of $[n]$ and $\{b_{j}\}_{j=1}^{l}$ be a set of odd integers of $[n]$, where $k=\lfloor\frac{n}{2}\rfloor$ and $l=\lceil\frac{n}{2}\rceil$. And let $\sigma=[b_1,a_2,a_3...,a_k,a_1,b_2,b_3,...,b_l]$ be a permutation such that $|a_1-b_1|\geq 2$. Then, there exist two noncyclic $(n,k!+k,\ell_{\infty})$-snakes. One noncyclic $(n,k!+k,\ell_{\infty})$-snake, denoted by $C_{\mathcal{T}_C}^{\sigma,\pi_1}$, is starting with $\sigma$ and ending with one permutation $\pi_1$, where
\begin{equation}
\pi_1=[a_2,a_3,...,a_{k-1},a_k,a_1,b_1,b_2,...,b_l].\nonumber
\end{equation}
Another noncyclic $(n,k!+k,\ell_{\infty})$-snake, denoted by $\hat{C}_{\mathcal{T}_{\hat{C}}}^{\sigma,\pi_2}$, is starting with $\sigma$ and ending with one permutation $\pi_2$, where
\begin{equation}
\pi_2=[a_2,a_3,...,a_{k-1},a_1,a_k,b_1,b_2,...,b_l].\nonumber
\end{equation}
\end{lemma}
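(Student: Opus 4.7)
The plan is to build each of the two snakes by concatenating two stages so that the total count is exactly $k!+k$. Stage~I is a \emph{bridge} of $k$ push-to-the-top transitions starting at $\sigma$ that slides $b_1$ from position~$1$ to position~$k+1$ while rearranging the evens $a_1,\ldots,a_k$ in positions $1$ through~$k$, terminating at a specific permutation $\nu$ of the form $[\text{arrangement of }a_1,\ldots,a_k,b_1,b_2,\ldots,b_l]$. Stage~II is a Hamiltonian traversal of length $k!-1$ in the cycle-segment graph whose vertices are the $k!$ permutations with evens in positions $1,\ldots,k$ and $b_1,b_2,\ldots,b_l$ at positions $k+1,\ldots,n$, and whose edges are push-to-the-top operations $t_2,\ldots,t_k$. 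The cyclic and complete $k$-RMGC of Lemma~\ref{Lm 2} supplies such a Hamiltonian cycle on this graph; cutting one edge yields a Hamiltonian path of the needed length, and (as I explain below) one can arrange the cut so that the path ends at $\pi_1$ or at $\pi_2$. Because Stage~I and Stage~II share the permutation $\nu$, the total number of distinct permutations is $k + k!$.

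Verification of the snake property splits into three pieces. First, any two Stage~II permutations share the parity pattern ``$k$ evens followed by $l$ odds'', so any discrepancy must occur in one of the first~$k$ positions between two distinct evens and hence has magnitude at least~$2$. Second, consecutive Stage~I transitions have $\ell_\infty$-distance at least~$2$: the initial $t_{k+1}$ brings $a_1$ to position~$1$ giving difference $|a_1-b_1|\ge 2$ by hypothesis, and each subsequent bridge transition brings a different even to position~$1$, yielding a difference of at least~$2$ against the preceding even.

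The main obstacle is verifying cross-stage distances, i.e., $d_\infty(\pi^{(i)},\rho)\ge 2$ for every bridge permutation $\pi^{(i)}$ with $1\le i\le k-1$ and every Stage~II permutation $\rho$. The two positions where their parities can disagree are $j_i$ (where $\pi^{(i)}$ has $b_1$ and $\rho$ has an even) and $k+1$ (where $\pi^{(i)}$ has an even and $\rho$ has $b_1$); the only route to $d_\infty\le 1$ requires both evens involved to equal the unique even in $[n]$ at distance~$1$ from~$b_1$, and forces coincidences at the remaining $k-1$ even positions. I would close this case by a structural analysis of the bridge: the sequence of transitions is chosen so that, as a function of $i$, the identity of the even at position~$k+1$ of $\pi^{(i)}$ combined with the identities of the evens at positions $1,\ldots,k$ of the only candidate $\rho$ forces a second position of disagreement, contradicting $d_\infty\le 1$ or invoking $|a_1-b_1|\ge 2$ directly.

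Finally, the snake ending at $\pi_2$ is obtained from the one ending at $\pi_1$ by replacing the last segment of the Stage~II traversal with an alternative Hamiltonian path that ends at $\pi_2$: since $\pi_1$ and $\pi_2$ differ only by a transposition of the entries in positions $k-1$ and $k$, one can reroute the final few edges of the RMGC traversal (using operations $t_2,\ldots,t_k$) to terminate at $\pi_2$ instead, with all three verification pieces above going through verbatim. The hardest ingredient throughout will be the delicate cross-stage argument, because it is sensitive to which even happens to lie within distance~$1$ of $b_1$ in $[n]$; the saving grace is that at most two evens have this property, and the bridge transitions can be tuned so that they never induce a simultaneous small difference at both positions $j_i$ and $k+1$.
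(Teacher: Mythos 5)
Your overall decomposition (a $k$-step bridge starting at $\sigma$, followed by a traversal of all $k!$ arrangements of the evens supplied by the cyclic complete $k$-RMGC, cut so as to end at $\pi_1$ or $\pi_2$) matches the paper's, and your Stage~II and endpoint arguments are essentially the paper's (rotating the cyclic RMGC so its last transition is $t_k$ or $t_{k-1}$, which is what Lemma~\ref{Lm 2} provides). But your bridge applies the transitions in the wrong order, and this is fatal rather than a deferrable technicality. You start with $t_{k+1}$, which immediately brings $a_1$ to the front and parks $a_k$ at position $k+1$; after $i$ such steps the even $a_{k+1-i}$ sits at position $k+1$, so every even $a_2,\dots,a_k$ occupies position $k+1$ in some intermediate bridge permutation. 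Now at least one of $b_1-1,\,b_1+1$ is an even element of $[n]$, and by the hypothesis $|a_1-b_1|\ge 2$ it is not $a_1$; call it $a_m$ with $m\ge 2$. In the bridge permutation $\pi^{(k+1-m)}$, the entry $a_m$ is at position $k+1$ and $b_1$ is at some position $j\le k$. Since your Stage~II consists of \emph{all} $k!$ arrangements of the evens on the first $k$ positions (with $b_1,\dots,b_l$ fixed behind them), it contains the permutation $\rho$ that agrees with $\pi^{(k+1-m)}$ everywhere except that $\rho(j)=a_m$ and $\rho(k+1)=b_1$. Then $d_\infty\bigl(\pi^{(k+1-m)},\rho\bigr)=|a_m-b_1|=1$, and the snake property fails. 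No ``tuning'' of a bridge whose first transition is $t_{k+1}$ can avoid this, because that first step already places $a_k$ at position $k+1$, and nothing in the hypothesis controls $|a_k-b_1|$.

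The missing idea is precisely the paper's design choice: apply $t_{k+1}$ \emph{last}, not first. The paper's bridge applies $t_k$ a total of $k-1$ times, cycling $b_1$ through the first $k$ positions while $a_1$ stays put at position $k+1$ (a transition $t_m$ with $m\le k$ never touches that cell), and only then applies $t_{k+1}$, which moves $a_1$ to the front and puts $b_1$ at position $k+1$ for the remainder of the code. With this order every bridge permutation has $a_1$ at position $k+1$ and every Stage~II permutation has $b_1$ there, so \emph{all} cross-stage distances are at least $|a_1-b_1|\ge 2$ directly by hypothesis --- the ``delicate'' case you flagged disappears entirely, which is exactly why the hypothesis is stated the way it is. A smaller point: inside the bridge you only check consecutive pairs, but the snake property requires all pairs; this is easy to repair, since two distinct rotations of the tuple $(b_1,a_2,\dots,a_k)$ disagree at every one of the first $k$ positions, and at least $k-2\ge 1$ of those disagreements are between two distinct evens, hence of magnitude at least $2$.
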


\begin{proof}
We prove only the existence of $C_{\mathcal{T}_C}^{\sigma,\pi_1}$, since the proof of the existence of $\hat{C}_{\mathcal{T}_{\hat{C}}}^{\sigma,\pi_2}$ is similar. For convenience, let $C_{\mathcal{T}_C}^{\sigma,\pi_1}\triangleq(\sigma_0,\sigma_1,...,\sigma_{k!+k-1})$ and  $\mathcal{T}_C\triangleq(t_{{\alpha}_{1}(1)},t_{{\alpha}_{1}(2)},...,t_{{\alpha}_{1}(k!+k-1)})$.

Now, by Lemma \ref{Lm 1}, there exists a cyclic and complete $k$-RMGC with its transition sequence $\mathcal{T}_k$, where
\begin{equation}
\mathcal{T}_k=(t_{i_{k}(1)},t_{i_{k}(2)},...,t_{i_{k}(k!)}).\label{Eq 5}
\end{equation}
By Lemma \ref{Lm 2}, since $k\geq 3$, we have that
\begin{equation}
t_{i_{k}(s_1)}=t_{k}~ \text{and}~t_{i_{k}(s_2)}=t_{k-1}~~\text{for some $s_1,s_2\in [k!]$}.\label{Eq 6}
\end{equation}
Hence, by $(\ref{Eq 5})$ and $(\ref{Eq 6})$, we can obtain two transition sequences, denoted by $\mathcal{T}_{k}^{1}$ and $\mathcal{T}_{k}^{2}$, where
\begin{equation}
\mathcal{T}_{k}^{1}=(t_{i_{k}(s_1+1)},t_{i_{k}(s_1+2)},...,t_{i_{k}(k!)},t_{i_{k}(1)},t_{i_{k}(2)},...,
t_{i_{k}(s_1)})\nonumber
\end{equation}
and
\begin{equation}
\mathcal{T}_{k}^{2}=(t_{i_{k}(s_2+1)},t_{i_{k}(s_2+2)},...,t_{i_{k}(k!)},t_{i_{k}(1)},t_{i_{k}(2)},...,
t_{i_{k}(s_2)}).\nonumber
\end{equation}
For convenience, we let $\mathcal{T}_{k}^{j}\triangleq(t_{\beta_{j}(1)},t_{\beta_{j}(2)},...,t_{\beta_{j}(k!)})$ for $j=1,2$. Applying some transition sequence $\mathcal{T}_{k}^{j}$ on one initial permutation $\hat{\pi}$, where $\hat{\pi}\in S_k$ and $\hat{\pi}=[c_1,c_2,...,c_k]$, then we can obtain a cyclic and complete $k$-RMGC, denoted by $C_{\mathcal{T}_{k}^{j}}^{\hat{\pi}}$, with its last permutation $\tilde{\pi}_j$ for $j=1,2$. By the construction of $\mathcal{T}_{k}^{j}$, when $j=1$, we have that
\begin{equation}
\tilde{\pi}_1=[c_2,c_3,...,c_{k-1},c_k,c_1].\label{Eq 7}
\end{equation}
And when $j=2$, we have that
\begin{equation}
\tilde{\pi}_2=[c_2,c_3,...,c_{k-1},c_1,c_k].\nonumber
\end{equation}

Next, we construct the transition sequence of $C_{\mathcal{T}_C}^{\sigma,\pi_1}$. We let $\sigma_0\triangleq\sigma$, then $\sigma_0=[b_1,a_2,...,a_k,a_1,b_2,...,b_l]$. When $1\leq j\leq k-1$, we let $t_{\alpha_{1}(j)}=t_k$. When $j=k$, we let $t_{\alpha_{1}(k)}=t_{k+1}$. If $k+1\leq j\leq k!+k-1$, we use the transition sequence $\mathcal{T}_{k}^{1}$ to construct the p-transition $t_{\alpha_{1}(j)}$, and let $t_{\alpha_{1}(j)}=t_{\beta_{1}(j-k)}$. Hence, we have that $\sigma_j=t_{\alpha_{1}(j)}(\sigma_{j-1})$ for all $1\leq j\leq k!+k-1$.

Finally, we will prove that for any $0\leq i<j\leq k!+k-1$, we have that $d_{\infty}(\sigma_i,\sigma_j)\geq 2$. By the construction of $t_{\alpha_{1}(j)}$, when $1\leq j\leq k-2$, we have that
\begin{equation}
\sigma_j=[a_{k+1-j},..,a_k,b_1,a_2,...,a_{k-j},a_1,b_2,...,b_l].\nonumber
\end{equation}
When $j=k-1$, we have that
\begin{equation}
\sigma_{k-1}=[a_2,..,a_k,b_1,a_1,b_2,...,b_l].\nonumber
\end{equation}
When $j=k$, we have that
\begin{equation}
\sigma_k=[a_1,a_2,...,a_k,b_1,b_2,...,b_l].\label{Eq 8}
\end{equation}
By $(\ref{Eq 7})$ and $(\ref{Eq 8})$, we can obtain that
\begin{equation}
\pi_1=\sigma_{k!+k-1}=[a_2,...,a_k,a_1,b_1,...,b_l]. \label{Eq 9}
\end{equation}
When $0\leq i<j\leq k-1$, we obtain easily that
\begin{equation}
d_{\infty}(\sigma_i,\sigma_j)\geq 2. \label{Eq 10}
\end{equation}
When $0\leq i\leq k-1~\text{and}~k\leq j\leq k!+k-1$, we have $\sigma_i(k+1)=a_1~\text{and}~\sigma_j(k+1)=b_1$, then
\begin{align}
d_{\infty}(\sigma_i,\sigma_j)\geq&|\sigma_i(k+1)-\sigma_j(k+1)|\nonumber\\
=&|a_1-b_1|\nonumber\\
\geq&2.\label{Eq 11}
\end{align}
When $k\leq i<j\leq k!+k-1$, we know that the first $k$ elements of $\sigma_i$ and $\sigma_j$ are different permutations over $\{a_j\}_{j=1}^{k}$. Since $\{a_j\}_{j=1}^{k}$ is a set of even integers, then
\begin{equation}
d_{\infty}(\sigma_i,\sigma_j)\geq 2.\label{Eq 12}
\end{equation}
Hence, by $(\ref{Eq 9})-(\ref{Eq 12})$, we can obtain a noncyclic $(n,k!+k,\ell_{\infty})$-snake $C_{\mathcal{T}_C}^{\sigma,\pi_1}$ starting with $\sigma$ and ending with $\pi_1=[a_2,a_3,...,a_k,a_1,b_1,b_2,...,b_l]$.

Similarly, we can construct another noncyclic $(n,k!+k,\ell_{\infty})$-snake $\hat{C}_{\mathcal{T}_{\hat{C}}}^{\sigma,\pi_2}$. Let $\mathcal{T}_{\hat{C}}\triangleq(t_{{\alpha}_{2}(1)},t_{{\alpha}_{2}(2)},...,t_{{\alpha}_{2}(k!+k-1)})$ and $\hat{C}_{\mathcal{T}_{\hat{C}}}^{\sigma,\pi_2}\triangleq(\hat{\sigma}_0,\hat{\sigma}_1,...,\hat{\sigma}_{k!+k-1})$. Analogously, when $1\leq j\leq k-1$, we let $t_{\alpha_{2}(j)}=t_k$. When $j=k$, we let $t_{\alpha_{2}(k)}=t_{k+1}$. If $k+1\leq j\leq k!+k-1$, we use the transition sequence $\mathcal{T}_{k}^{2}$ to construct the transition $t_{\alpha_{2}(j)}$, and let $t_{\alpha_{2}(j)}=t_{\beta_{2}(j-k)}$. Moreover, we let ${\hat{\sigma}}_0=\sigma$. Then, we have that $\hat{\sigma}_j=t_{\alpha_{2}(j)}(\hat{\sigma}_{j-1})$ for all $1\leq j\leq k!+k-1$. As the above discussion, we can also obtain another noncyclic $(n,k!+k,\ell_{\infty})$-snake $\hat{C}_{\mathcal{T}_{\hat{C}}}^{\sigma,\pi_2}$ starting with $\sigma$ and ending with $\pi_2=[a_2,a_3,...,a_{k-1},a_1,a_k,b_1,b_2,...,b_l]$.
\end{proof}

In the following, by Lemma \ref{Lm 3}, we will give the construction of an $(n,M,\ell_{\infty})$-snake of size
$M=\lceil\frac{n}{2}\rceil!(\lfloor\frac{n}{2}\rfloor+\lfloor\frac{n}{2}\rfloor!)$. Suppose $p\triangleq\lceil\frac{n}{2}\rceil$ and $q\triangleq\lfloor\frac{n}{2}\rfloor$, then $[n]$ has $p$ odd elements and $q$ even ones. We let $E_n$ be the set of even integers of $[n]$, and let $O_n$ be the set of odd integers of $[n]$. Consider $n\geq 6$, we let $\sigma_0$ be the first permutation of the $\ell_{\infty}$-snake, where
\begin{equation}
\sigma_0=[1,4,...,2q-2,2,2q,3,5...,2p-1].\label{Eq 13}
\end{equation}

Firstly, we construct one transition sequence, denoted by $\mathcal{T}=\{t_{i(1)},t_{i(2)},...,t_{i(M)}\}$. By Lemma \ref{Lm 1}, we take a cyclic and complete $p$-RMGC by using the following transition sequence
\begin{equation}
\mathcal{T}_p=(t_{i_{p}(1)},t_{i_{p}(2)},...,t_{i_{p}(p!)}).\label{Eq 14}
\end{equation}
Moreover, by Lemma \ref{Lm 3}, we can obtain two noncyclic $(n,M_q,\ell_{\infty})$-snakes of size $M_q=q!+q$, denoted by
$C_{\mathcal{T}_C}^{\sigma,\pi_1}$ and $\hat{C}_{\mathcal{T}_{\hat{C}}}^{\sigma,\pi_2}$, respectively. And $C_{\mathcal{T}_C}^{\sigma,\pi_1}$ is given by the following transition sequence
\begin{equation}
\mathcal{T}_C=(t_{\alpha_{1}(1)},t_{\alpha_{1}(2)},...,t_{\alpha_{1}(M_q-1)}) \label{Eq 15}
\end{equation}
starting with $\sigma=[b_1,a_2,...,a_{q-1},a_q,a_1,b_2,...,b_p]$ and ending with $\pi_1=[a_2,...,a_{q-1},a_q,a_1,b_1,b_2,...,b_p]$, where $\sigma,\pi_1\in S_n,~E_n=\{a_j|j=1,...,q\},~ O_n=\{b_j|j=1,...,p\},~\text{and}~|a_1-b_1|\geq 2$. Similarly, $\hat{C}_{\mathcal{T}_{\hat{C}}}^{\sigma,\pi_2}$ is determined by the following transition sequence
\begin{equation}
\mathcal{T}_{\hat{C}}=(t_{\alpha_{2}(1)},t_{\alpha_{2}(2)},...,t_{\alpha_{2}(M_q-1)}) \label{Eq 16}
\end{equation}
starting with $\sigma=[b_1,a_2,...,a_{q-1},a_q,a_1,b_2,...,b_p]$ and ending with $\pi_2=[a_2,...,a_{q-1},a_1,a_q,b_1,b_2,...,b_p]$, where $\sigma,\pi_2\in S_n,~E_n=\{a_j|j=1,...,q\},~ O_n=\{b_j|j=1,...,p\},~\text{and}~|a_1-b_1|\geq 2$.

By $(\ref{Eq 13})-(\ref{Eq 16})$, we construct the transition sequence $\mathcal{T}=(t_{i(1)},t_{i(2)},...,t_{i(M)})$. According to the construction of $\sigma_0$, we have that $a_1=2q$ and $a_q=2$.


We consider $0\leq l\leq p!-1$. When $|\sigma_{l\cdot(q!+q)}(i_{p}(l+1)+q)-2q|\neq1$, if $\sigma_{l\cdot(q!+q)}(q+1)=2q$, then we let
\begin{equation}
t_{i(j)}=t_{\alpha_{1}(j-l\cdot(q!+q))},\label{Eq 17}
\end{equation}
otherwise we let
\begin{equation}
t_{i(j)}=t_{\alpha_{2}(j-l\cdot(q!+q))} \label{Eq 18}
\end{equation}
for all $l\cdot(q!+q)+1\leq j\leq (l+1)\cdot(q!+q)-1$. When $|\sigma_{l\cdot(q!+q)}(i_{p}(l+1)+q)-2q|=1$, if $\sigma_{l\cdot(q!+q)}(q+1)=2$, then we let
\begin{equation}
t_{i(j)}=t_{\alpha_{1}(j-l\cdot(q!+q))}, \label{Eq 19}
\end{equation}
otherwise we let
\begin{equation}
t_{i(j)}=t_{\alpha_{2}(j-l\cdot(q!+q))} \label{Eq 20}
\end{equation}
for all $l\cdot(q!+q)+1\leq j\leq (l+1)\cdot(q!+q)-1$.

And we let
\begin{equation}
t_{i(j\cdot(q!+q))}=t_{i_{p}(j)+q},~~\text{for all $1\leq j\leq p!$}.\label{Eq 21}
\end{equation}
Hence, $\mathcal{T}$ and $\sigma_0$ can yield one permutation sequence, defined by $C_{\mathcal{T}}=(\sigma_0,\sigma_1,...,\sigma_{M})$. By $(\ref{Eq 17})-(\ref{Eq 21})$ and its first permutation $\sigma_0$, we have that $\sigma_j=t_{i(j)}(\sigma_{j-1})$ for all $1\leq j\leq M$. In the following, we will obtain that $\sigma_0=\sigma_{M}$. Moreover, we let $C_{\mathcal{T}}^{\sigma_0}\triangleq(\sigma_0,\sigma_1,...,\sigma_{M-1})$. Then we will prove that $C_{\mathcal{T}}^{\sigma_0}$ is an $\ell_{\infty}$-snake in the following theorem.

\begin{theorem}
For all $n\geq 6$, there exist an $(n,M,\ell_{\infty})$-snake of size $M=\lceil\frac{n}{2}\rceil! (\lfloor\frac{n}{2}\rfloor+\lfloor\frac{n}{2}\rfloor!)$.\label{Th 1}
\end{theorem}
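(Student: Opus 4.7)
The plan is to verify three properties of the sequence $C_{\mathcal{T}}^{\sigma_0}$: it has length $M = p!(q + q!)$ with $\sigma_M = \sigma_0$ (cyclicity); consecutive blocks are glued so that Lemma~\ref{Lm 3} can be invoked on every block; and any two distinct permutations in the sequence are at $\ell_\infty$-distance at least $2$. I will decompose $C_{\mathcal{T}}^{\sigma_0}$ into $p!$ consecutive blocks, where block $l$ (for $0 \le l \le p!-1$) consists of the permutations with indices $l(q!+q)$ through $l(q!+q)+q!+q-1$, and is immediately followed by the single inter-block p-transition $t_{i_p(l+1)+q}$ at position $(l+1)(q!+q)$.

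The first step is an induction on $l$ to establish the invariant that $\sigma_{l(q!+q)}$ has the Lemma~\ref{Lm 3} form $[b_1',a_2',\dots,a_q',a_1',b_2',\dots,b_p']$ with $\{a_j'\}=E_n$, $\{b_j'\}=O_n$, and $|a_1'-b_1'|\ge 2$. The base case is $\sigma_0$ as defined in $(\ref{Eq 13})$. For the inductive step, suppose the invariant holds at the start of block $l$. Within the block, either $\mathcal{T}_C$ or $\mathcal{T}_{\hat C}$ from Lemma~\ref{Lm 3} is applied; both terminate at a permutation whose even entries occupy the first $q$ positions and whose odd entries occupy the last $p$ positions, but the position of $a_1$ (resp.~$a_q$) differs between $\pi_1$ and $\pi_2$. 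Applying the inter-block push $t_{i_p(l+1)+q}$ (with $i_p(l+1)+q \in \{q+2,\dots,n\}$) pulls an odd entry to position $1$ and shifts the intermediate entries right by one, so that the element formerly at position $q$ lands at position $q+1$. This makes that element the new $a_1'$. The case split in $(\ref{Eq 17})$--$(\ref{Eq 20})$ selects $\pi_1$ versus $\pi_2$ precisely so that the incoming $a_1'$ at position $q+1$ and the incoming $b_1'$ at position $1$ satisfy $|a_1'-b_1'|\ge 2$; this is where one uses that $a_1=2q$ and $a_q=2$, and that only one of them can be at $\ell_\infty$-distance $1$ from the pushed odd element.

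The second step is the $\ell_\infty$-distance verification. For two permutations within the same block, Lemma~\ref{Lm 3} already yields $d_\infty\ge 2$. For two permutations $\sigma_i,\sigma_j$ in distinct blocks $l_1 < l_2$, the invariant implies that throughout the construction the even entries occupy a fixed subset of positions (either $\{2,\dots,q+1\}$ at block boundaries, or the first $q$ positions mid-block), while odd entries occupy the complement. Across blocks, the odd entries undergo the $p$-RMGC $\mathcal{T}_p$, so distinct blocks give distinct arrangements of odd entries on their odd-positions, and a coordinate where the two permutations carry odd entries of different values contributes at least $2$ to the $\ell_\infty$-distance (since all odd values are spaced by $2$). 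A symmetric argument handles coordinates carrying even entries. The delicate situation is when the two blocks differ only by one inter-block transition and one mid-block p-transition has moved an entry into an ``adjacent'' position; the case condition $|\sigma_{l(q!+q)}(i_p(l+1)+q)-2q| = 1$ versus $\ne 1$ controls exactly this scenario.

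Finally, cyclicity is checked by observing that the $p!$ inter-block transitions $t_{i_p(l+1)+q}$ implement the cyclic complete $p$-RMGC on the odd entries (in positions $\{1,q+2,\dots,n\}$), while each block's Lemma~\ref{Lm 3} sub-snake returns the even entries to the same arrangement in positions $\{2,\dots,q+1\}$. Hence $\sigma_M = \sigma_0$, and the length is $p!\cdot(q!+q)$, as required. The principal obstacle is the $\ell_\infty$-distance verification across blocks, and the tight case analysis that shows the construction's choice between $\mathcal{T}_C^1$ and $\mathcal{T}_C^2$ simultaneously preserves the Lemma~\ref{Lm 3} invariant and prevents any two permutations from coming within $\ell_\infty$-distance $1$ of each other.
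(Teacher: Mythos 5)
Your proposal follows the same route as the paper: the same block decomposition, the same inductive invariant (each block start satisfies the hypothesis of Lemma~\ref{Lm 3}, maintained by the case split (\ref{Eq 17})--(\ref{Eq 20})), the same within-block appeal to Lemma~\ref{Lm 3}, and the same cross-block argument through the odd entries in the last $p-1$ positions, which no within-block transition touches and which form distinct arrangements across blocks because the block starts traverse a cyclic complete $p$-RMGC. Up to that point your argument is sound, modulo one imprecision: the even entries do \emph{not} occupy a fixed set of positions throughout a block --- during the first $q-1$ transitions of a block the old $b_1$ migrates through the first $q$ positions --- but the cross-block argument only needs the stable positions $q+2,\dots,n$, so this is harmless.

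The genuine gap is in your cyclicity step. You claim that ``each block's Lemma~\ref{Lm 3} sub-snake returns the even entries to the same arrangement in positions $\{2,\dots,q+1\}$''. That is false, and it contradicts your own first step: a block run with $\mathcal{T}_{\hat{C}}$ (the $\pi_2$-ending) swaps the two elements $2$ and $2q$ sitting in positions $q$ and $q+1$, so after the inter-block push the even part becomes $[4,\dots,2q-2,2q,2]$ rather than $[4,\dots,2q-2,2,2q]$. Which blocks swap depends on the case conditions, so without further argument the permutation reached after $p!$ blocks could be $[1,4,\dots,2q-2,2q,2,3,5,\dots,2p-1]\neq\sigma_0$, and the code would fail to close into a cycle. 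The paper closes this hole by examining the last block explicitly: by cyclicity and completeness of the $p$-RMGC, the element pushed by the very last transition is $1$; since $|1-2q|\neq 1$, the rule (\ref{Eq 17})--(\ref{Eq 18}) forces exactly the block ending that places $2q$ back in position $q+1$ (this works out whether the last block start carries $2$ or $2q$ there), so the final permutation is $[1,4,\dots,2q-2,2,2q,3,5,\dots,2p-1]=\sigma_0$. Your own invariant, if strengthened to ``after pushing the odd element $b$, position $q+1$ of the new block start holds the element of $\{2,2q\}$ that is not adjacent to $b$'', delivers this immediately; but as written your final step rests on a false claim. (A smaller point: the case condition plays no role in cross-block distances, contrary to your ``delicate situation'' remark; its only job is to keep Lemma~\ref{Lm 3} applicable block after block, and, as just explained, to force the closing of the cycle.)
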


\begin{proof}
According to the construction of $C_{\mathcal{T}}^{\sigma_0}$, we consider $0\leq l\leq p!-1$. If $|\sigma_{l\cdot(q!+q)}(i_{p}(l+1)+q)-2q|\neq1$ and $\sigma_{l\cdot(q!+q)}(q+1)=2q ~\text{or}~ 2$, by Lemma \ref{Lm 3} and $(\ref{Eq 13})$, we have that
\begin{align}
\sigma_{(l+1)\cdot(q!+q)}=[\sigma_{l\cdot(q!+q)}(i_{p}(l+&1)+q),4,...,2q-2,2,2q,\hat{b}_{2},...,\hat{b}_{p}],
\label{Eq 22}
\end{align}
where $\{\sigma_{l\cdot(q!+q)}(i_{p}(l+1)+q),\hat{b}_2,...,\hat{b}_p\}=O_n$.
Similarly, if $|\sigma_{l\cdot(q!+q)}(i_{p}(l+1)+q)-2q|=1$ and $\sigma_{l\cdot(q!+q)}(q+1)=2q ~\text{or}~ 2$, by Lemma \ref{Lm 3} and $(\ref{Eq 13})$, we can obtain that
\begin{align}
\sigma_{(l+1)\cdot(q!+q)}=[\sigma_{l\cdot(q!+q)}(i_{p}(l+&1)+q),4,...,2q-2,2q,2,\hat{b}_{2},...,\hat{b}_{p}].
\label{Eq 23}
\end{align}
When $|\sigma_{l\cdot(q!+q)}(i_{p}(l+1)+q)-2q|=1$, since $|2q-2|\geq 4$, then we have
\begin{equation}
|\sigma_{l\cdot(q!+q)}(i_{p}(l+1)+q)-2|\geq 2.\label{Eq 24}
\end{equation}
By $(\ref{Eq 22})-(\ref{Eq 24})$, then we have $|\sigma_{(l+1)\cdot(q!+q)}(1)-\sigma_{(l+1)\cdot(q!+q)}(q+1)|\geq 2$ for all $0\leq l\leq p!-1$. Since $\sigma_0(1)=1$ and $\sigma_0(q+1)=2q$, then $|\sigma_0(1)-\sigma_0(q+1)|\geq 2$. Thus, for all $0\leq l\leq p!-1$, $\sigma_{l\cdot(q!+q)}$ satisfies the condition of Lemma \ref{Lm 3}. Hence, by the construction of $C_{\mathcal{T}}^{\sigma_0}$ and Lemma \ref{Lm 3}, for all $0\leq l\leq p!-1,~0\leq j<k\leq q!+q-1$, we have that
\begin{equation}
d_{\infty}(\sigma_{l(q!+q)+j},\sigma_{l(q!+q)+k})\geq 2.\nonumber
\end{equation}

\vskip 1mm Furthermore, for $l,\tilde{l}\in [p!]~\text{and}~l<\tilde{l}$, since the code generated by its transition sequence
$\mathcal{T}_p=(t_{i_{p}(1)},t_{i_{p}(2)},...,t_{i_{p}(p!)})$ is a cyclic and complete $p$-RMGC code, we are assured that for all $0\leq j,\tilde{j}\leq q!+q-1$, the last $p-1$ elements of both $\sigma_{(l-1)(q!+q)+j}$ and $\sigma_{(\tilde{l}-1)(q!+q)+\tilde{j}}$ are all odd and represent two distinct permutations. Hence, we have that
\begin{equation}
d_{\infty}(\sigma_{(l-1)(q!+q)+j},\sigma_{(\tilde{l}-1)(q!+q)+\tilde{j}})\geq 2.\nonumber
\end{equation}
Finally, we will prove that $t_{i(p!(q!+q))}(\sigma_{p!(q!+q)-1})=\sigma_0$. Since the code generated by the transition sequence $\mathcal{T}_p=(t_{i_{p}(1)},t_{i_{p}(2)},\\...,t_{i_{p}(p!)})$ is a cyclic and complete $p$-RMGC code, and by the construction of $\sigma_0$, we have $\sigma_{(p!-1)(q!+q)}(i_{p}(p!-1)+q)=3$. Thus, we have that $|2q-3|\geq 2$ for all $n\geq 6$. Hence, we can obtain that
\begin{align}
\sigma_{(p!-1)(q!+q)}=[3,4,&...,2q-2,2,2q,5,...,2i_{p}(p!)-3,2i_{p}(p!)-1,1,2i_{p}(p!)+1,...,2p-1].\label{Eq 25}
\end{align}
Since $\sigma_{(p!-1)(q!+q)}(i_{p}(p!)+q)=1$ and $\sigma_{(p!-1)(q!+q)}(q+1)=2q$, then we have that
$|\sigma_{(p!-1)(q!+q)}(i_{p}(p!)+q)-2q|\neq 1$. Then, by $(\ref{Eq 22})$ and $(\ref{Eq 25})$, we can obtain that
\begin{align}
\sigma_{p!(q!+q)}=[1,4,...,2q-2,2,2q,3,5,...,2p-1]=\sigma_0\nonumber.
\end{align}
Therefore, $C_{n,\mathcal{T}_{C_n}}$ is an $(n,M,\ell_{\infty})$-snake of size $M=\lceil\frac{n}{2}\rceil! (\lfloor\frac{n}{2}\rfloor+\lfloor\frac{n}{2}\rfloor!)$.
\end{proof}

\subsection{Construction of $\ell_{\infty}$-snakes by using $\mathcal{K}$-snakes}
\label{subsec2}
In this subsection, we will construct $\ell_{\infty}$-snakes by using $\mathcal{K}$-snakes. In order to complete the construction, we need some notations and lemmas of $\mathcal{K}$-snakes.

For a $\mathcal{K}$-snake $C$ over $S_n$, it is a Gray code over $S_n$. Furthermore, for any two distinct codewords $\sigma,\pi\in C$, we have that $d_{\mathcal{K}}(\sigma,\pi)\geq 2$. Here, $d_{\mathcal{K}}(\sigma,\pi)$ is the $\mathcal{K}$-distance between $\sigma$ and $\pi$ in \cite{Jiang2}, where
\begin{equation}
d_{\mathcal{K}}(\sigma,\pi)=|\{(i,j):\sigma^{-1}(i)<\sigma^{-1}(j) \wedge\pi^{-1}(i)>\pi^{-1}(j)\}|.\nonumber
\end{equation}
Moreover, the Kendall's $\tau$-metric is right invariant \cite{Deza}, that is, for every three permutations $\sigma,\pi,\rho\in S_n$, we have $d_{K}(\sigma,\pi)=d_{K}(\sigma\circ\rho,\pi\circ\rho)$. We denote by an $(n,M,\mathcal{K})$-snake a $\mathcal{K}$-snake of size $M$ in $S_n$.

In \cite{Yehezkeally}, Yehezkeally and Schwartz constructed a $(2n+1,M_{2n+1},\mathcal{K})$-snake of length $M_{2n+1}=(2n+1)(2n-1)M_{2n-1}$ in $A_{2n+1}$, from a $(2n-1,M_{2n-1},\mathcal{K})$-snake in $A_{2n-1}$. Later Horovitz and Etzion \cite{Horvitz} improved on this result by constructing a $(2n+1,M_{2n+1},\mathcal{K})$-snake of length $M_{2n+1}=((2n+1)2n-1)M_{2n-1}$ in $A_{2n+1}$, from a $(2n-1,M_{2n-1},\mathcal{K})$-snake in $A_{2n-1}$. The authors in \cite{Horvitz} also presented a direct construction aiming at obtaining a snake in $A_{2n+1}$ of size $\frac{(2n+1)!}{2}-2n+1$. Recently, Zhang and Ge \cite{Zhang} gave a rigorous proof for the Horovitz-Etzion construction of a snake in $A_{2n+1}$ of size $\frac{(2n+1)!}{2}-2n+1$. In order to use the $\mathcal{K}$-snake construction in \cite{Zhang}, we will give the following lemma.



\begin{lemma}\cite[Theorem 4]{Zhang}
\label{LM42}
The construction in \cite{Zhang} can yield a $(2n+1,M_{2n+1},\mathcal{K})$-snake in $A_{2n+1}$ of size $M_{2n+1}=\frac{(2n+1)!}{2}-2n+1$ with the transition sequence including $t_{2n+1}$ for all $n\geq 2$.
\end{lemma}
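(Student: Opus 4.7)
The plan is to reduce the statement to a direct appeal to \cite[Theorem 4]{Zhang} for the existence and size of the $\mathcal{K}$-snake, and then to add a small bookkeeping argument to verify the extra claim that the p-transition $t_{2n+1}$ appears in the transition sequence. Since \cite[Theorem 4]{Zhang} already constructs a $(2n+1,M_{2n+1},\mathcal{K})$-snake in $A_{2n+1}$ of the required size $M_{2n+1}=\frac{(2n+1)!}{2}-2n+1$, the only genuinely new content is the additional observation about $t_{2n+1}$.

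For the latter, I would unpack the recursive construction of Horovitz–Etzion as made rigorous by Zhang–Ge: a shorter $(2n-1)$-snake is concatenated with many shifted copies of itself, and these copies are glued together by "connector" p-transitions whose role is to pass from one coset of the natural embedding of $A_{2n-1}$ into the next coset inside $A_{2n+1}$. The key observation is that inter-coset jumps cannot be accomplished by any $t_i$ with $i\leq 2n-1$ alone, since such operations stabilize the last two coordinates jointly in a way that keeps one inside the current block; consequently the construction is forced to use p-transitions involving the last position, and a straightforward inspection of the explicit connector transitions used in \cite{Zhang} exhibits at least one $t_{2n+1}$.

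The main obstacle I expect is matching conventions: \cite{Zhang} uses its own indexing for the p-transitions (and in places describes transitions in terms of the element being pushed rather than the position it is pushed from), so the step that needs care is to translate their connector transitions into the notation $t_i$ used in this paper and to pick out one concrete index in the sequence where $t_{2n+1}$ occurs. Once this translation is carried out, the statement follows immediately by citing \cite[Theorem 4]{Zhang} together with this verification.
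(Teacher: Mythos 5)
You have matched the paper's approach in the only place where the paper has one: the paper gives no proof of this lemma at all --- it is stated purely as a citation of \cite[Theorem 4]{Zhang}, with the clause about $t_{2n+1}$ silently absorbed into that citation. So the existence-and-size part of your proposal is exactly what the paper does, and the only content to assess is your supplementary argument that $t_{2n+1}$ actually occurs in the transition sequence.

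As written, that argument has a gap. Your structural step --- that any $t_i$ with $i\leq 2n-1$ fixes positions $2n$ and $2n+1$, hence cannot realize inter-block moves --- only forces the construction to use $t_{2n}$ \emph{or} $t_{2n+1}$; it does not force a transition on the last position. The conclusion that $t_{2n+1}$ itself appears is then carried entirely by the deferred ``inspection of the connector transitions,'' which is no stronger than the paper's bare citation. The gap can be closed cleanly, with no inspection of \cite{Zhang} at all, by a counting argument: among all p-transitions on $S_{2n+1}$, only $t_{2n+1}$ changes the last coordinate (every $t_i$ with $i\leq 2n$ leaves position $2n+1$ untouched, while $t_{2n+1}$ replaces the last entry $a_{2n+1}$ by $a_{2n}\neq a_{2n+1}$). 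Hence, if $t_{2n+1}$ never occurred in the (cyclic) transition sequence, all codewords of the snake would share the same last entry, so the snake would have size at most $(2n)!$ (indeed at most $(2n)!/2$, since it lies in $A_{2n+1}$). But
\begin{equation*}
M_{2n+1}-(2n)! \;=\; \frac{(2n+1)!}{2}-2n+1-(2n)! \;=\; (2n-1)\left(\frac{(2n)!}{2}-1\right) \;>\; 0
\end{equation*}
for all $n\geq 2$ (at $n=2$: $57>24$), a contradiction. So the occurrence of $t_{2n+1}$ is forced by the size statement of \cite[Theorem 4]{Zhang} alone, and the convention-matching issues you worry about in your last paragraph never need to be confronted.
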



Furthermore, we require the following lemmas for constructing $\ell_{\infty}$-snakes by using $\mathcal{K}$-snakes.
\begin{lemma}
\label{Lm 5}
Suppose $\{a_j\}_{j=1}^{n}$, $n\geq 2$, is a set of integers of the same parity. Let $\sigma_{i}=[\sigma_{i}(1),..,\sigma_{i}(n),\sigma_{i}(n+1),b_{n+2},...,b_{m}]\in S_m$ for $i=1,2$, where $\sigma_1\neq\sigma_2,$ $\{\sigma_{i}(j)\}_{j=1}^{n+1}=\{a_j\}_{j=1}^{n}\cup\{x\}$ for $i=1,2$, and the parity of $x$ differs from that of the elements of $\{a_j\}_{j=1}^{n}$. If $\sigma_1$ and $\sigma_2$ are both odd permutations or even permutations, then $d_{\infty}(\sigma_1,\sigma_2)\geq2$.
\end{lemma}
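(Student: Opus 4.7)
My plan is to argue by contradiction: assume $d_{\infty}(\sigma_1,\sigma_2)\le 1$, so every coordinate difference $|\sigma_1(i)-\sigma_2(i)|$ lies in $\{0,1\}$. Let $S$ be the set of positions at which $\sigma_1$ and $\sigma_2$ disagree; since they share the suffix $b_{n+2},\dots,b_m$, we have $S\subseteq\{1,\dots,n+1\}$, and $S\ne\emptyset$ because $\sigma_1\ne\sigma_2$. Let $p_1,p_2$ be the positions carrying the ``minority parity'' symbol $x$ in $\sigma_1,\sigma_2$ respectively; both lie in $\{1,\dots,n+1\}$.

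First I dispose of the easy case $p_1=p_2$: then every $i\in S$ has $\sigma_1(i),\sigma_2(i)\in\{a_1,\dots,a_n\}$, so their difference is even; combined with the bound $\le 1$ this forces the difference to be $0$, contradicting $i\in S$. This ``parity of integers'' observation will be reused every time I can exhibit a disagreement between two values having the parity of the $a_j$'s.

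For the interesting case $p_1\ne p_2$, the coordinate constraint at $p_1$ and $p_2$ forces $\sigma_2(p_1),\sigma_1(p_2)\in\{x-1,x+1\}$. I split on whether these coincide. If they do, call the common value $\alpha$; then the multiset identity $\{\sigma_1(i):i\in S\}=\{\sigma_2(i):i\in S\}$ (which follows because $\sigma_1,\sigma_2$ agree off $S$ and are both permutations) forces any position in $S\setminus\{p_1,p_2\}$ to have both entries in $\{a_1,\dots,a_n\}\setminus\{\alpha\}$, contradicting $d_{\infty}\le 1$ by the easy case above. Hence $S=\{p_1,p_2\}$, so $\sigma_2$ is obtained from $\sigma_1$ by a single transposition of values at $p_1,p_2$; this flips the sign, contradicting the hypothesis that $\sigma_1$ and $\sigma_2$ are same-parity permutations. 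This is the only step where the parity-of-permutations assumption is invoked, and I expect the bookkeeping here to be the most delicate point.

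If instead $\sigma_2(p_1)\ne\sigma_1(p_2)$, then $\{\sigma_2(p_1),\sigma_1(p_2)\}=\{x-1,x+1\}$. Writing $\beta=\sigma_1(p_2)$ and examining the position $q$ at which $\sigma_2$ stores $\beta$ gives $q\in S$ and $\sigma_1(q)\in\{\beta-1,\beta+1\}$. Ruling out $\sigma_1(q)=x$ (which would force $q=p_1$ and hence $\sigma_2(p_1)=\beta$, collapsing back to the previous subcase) leaves $\sigma_1(q)=x\pm 2$. But $x\pm 2$ has the same parity as $x$, so it is neither $x$ nor any $a_j$, contradicting $\sigma_1(q)\in\{a_1,\dots,a_n,x\}$. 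The three contradictions together close the argument and yield $d_{\infty}(\sigma_1,\sigma_2)\ge 2$.
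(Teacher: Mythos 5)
Your proof is correct and takes essentially the same route as the paper's: the key observations in both are that two same-parity integers at $\ell_\infty$-distance at most $1$ must be equal (so only the positions involving $x$ can change), which forces $\sigma_2$ to be obtained from $\sigma_1$ by a single transposition of $x$ with $x-1$ or $x+1$, contradicting the assumption that $\sigma_1$ and $\sigma_2$ have the same permutation parity. Your subcase where $\sigma_2(p_1)\ne\sigma_1(p_2)$ (ruled out via the $x\pm2$ parity contradiction) is actually spelled out more carefully than in the paper, whose terser argument silently passes over this configuration when asserting that $\sigma_2$ must have one of the two transposed forms.
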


\begin{proof}
Since $\sigma_1\neq \sigma_2$, then $d_{\infty}(\sigma_1,\sigma_2)\geq1$. Suppose $d_{\infty}(\sigma_1,\sigma_2)<2$, we have that $d_{\infty}(\sigma_1,\sigma_2)=1$. Without loss of generality, we let $\sigma_1=[a_1,a_2,..,a_n,x,b_{n+2},..,b_m]$, $|a_{j_1}-x|=1$, and $|a_{j_2}-x|=1$, where $j_1,j_2\in[n]$. Since $\{a_j\}_{j=1}^{n}$ are the same parity and $d_{\infty}(\sigma_1,\sigma_2)=1$, then
$\sigma_2=[a_1,...,a_{j_1-1},x,a_{j_1+1},...,a_{n},a_{j_1},b_{n+2},...,b_m]$ or $\sigma_2=[a_1,...,a_{j_2-1},x,a_{j_2+1},...,a_{n},a_{j_2},b_{n+2},...,b_m]$. Thus, $\sigma_2$ can be obtained from $\sigma_1$ using one transposition of $a_{j_i}$ and $x$ for $i=1~\text{or}~2$. Hence, the parity of $\sigma_1$ differs from the parity of $\sigma_2$, which causes a contradiction. Then, we have that $d_{\infty}(\sigma_1,\sigma_2)\geq2$.
\end{proof}

\begin{lemma}
\label{Lm 6}
Suppose $C_n$ is an $(n,M_n,\mathcal{K})$-snake in $A_n$ with its first permutation $\pi_0$ and one transition sequence $\mathcal{T}_{C_n}=(t_{\hat{i}(1)},t_{\hat{i}(2)},...,t_{\hat{i}(M_n)})$. For any $\sigma_0\in S_n$, by applying the transition sequence $\mathcal{T}_{C_n}$ on the permutation $\sigma_0$, then we can obtain another $(n,M_n,\mathcal{K})$-snake, denoted by $\hat{C}_n=(\sigma_0,\sigma_1,...,\sigma_{M_n-1})$, where $\sigma_j=t_{\hat{i}(j)}(\sigma_{j-1})$ for all $j\in [M_n-1]$. Moreover, the parities of all the permutations of $\hat{C}_n$ are same.
\end{lemma}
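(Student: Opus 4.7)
The plan is to exploit the right invariance of the Kendall's $\tau$-metric, together with the observation that each p-transition $t_i$ acts on $S_n$ as multiplication (in the paper's group convention) by a fixed element from the left. Concretely, let $\tau_i\in S_n$ be the permutation $[i,1,2,\dots,i-1,i+1,\dots,n]$, i.e.\ $\tau_i(1)=i$, $\tau_i(k)=k-1$ for $2\le k\le i$, and $\tau_i(k)=k$ for $k>i$. Using the paper's convention $(\pi\circ\sigma)(k)=\sigma(\pi(k))$, a direct check against the definition of $t_i$ shows $t_i(\pi)=\tau_i\circ\pi$ for every $\pi\in S_n$. Consequently, setting
\[
A_j \triangleq \tau_{\hat{i}(j)}\circ\tau_{\hat{i}(j-1)}\circ\cdots\circ\tau_{\hat{i}(1)},
\]
the two permutation sequences generated by the common transition sequence $\mathcal{T}_{C_n}$ from the two different starting points satisfy $\pi_j=A_j\circ\pi_0$ and $\sigma_j=A_j\circ\sigma_0$ for every $j\in\{0,1,\dots,M_n-1\}$.

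Next, define the fixed permutation $\rho\triangleq\pi_0^{-1}\circ\sigma_0$, so that $\pi_0\circ\rho=\sigma_0$. This single identity propagates to every index, yielding $\sigma_j=A_j\circ\sigma_0=A_j\circ\pi_0\circ\rho=\pi_j\circ\rho$. By the right invariance of $d_{\mathcal{K}}$ recalled above, one obtains
\[
d_{\mathcal{K}}(\sigma_i,\sigma_j)=d_{\mathcal{K}}(\pi_i\circ\rho,\pi_j\circ\rho)=d_{\mathcal{K}}(\pi_i,\pi_j)\ge 2
\]
for all distinct $i,j$; distinctness of the $\sigma_j$'s likewise follows since right multiplication by $\rho$ is a bijection on $S_n$. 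Because $C_n$ is cyclic, the relation $\pi_0=t_{\hat{i}(M_n)}(\pi_{M_n-1})=A_{M_n}\circ\pi_0$ forces $A_{M_n}=e$, hence $t_{\hat{i}(M_n)}(\sigma_{M_n-1})=A_{M_n}\circ\sigma_0=\sigma_0$ and the new code closes up cyclically as well.

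The parity claim is then immediate: every $\pi_j$ lies in $A_n$ and is therefore even, and from $\sigma_j=\pi_j\circ\rho$ the sign of $\sigma_j$ equals the sign of $\rho$ independently of $j$, so all permutations of $\hat{C}_n$ share the common parity of $\rho$. The principal subtlety, and the step I would verify most carefully, is the identification $t_i(\pi)=\tau_i\circ\pi$ under the paper's non-standard convention $(\pi\circ\sigma)(k)=\sigma(\pi(k))$: positions and values must be tracked scrupulously, because if that identification were off by a transpose then the key identity $\sigma_j=\pi_j\circ\rho$ would fail to sit on the right-invariant side of $d_{\mathcal{K}}$ and the entire argument would collapse.
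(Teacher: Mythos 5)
Your proof is correct, and its skeleton is the same as the paper's: both arguments hinge on the identity $\sigma_j\circ\sigma_0^{-1}=\pi_j\circ\pi_0^{-1}$ (your $\sigma_j=\pi_j\circ\rho$ with $\rho=\pi_0^{-1}\circ\sigma_0$ is exactly this relation) combined with the right invariance of $d_{\mathcal{K}}$. The differences lie in how the ingredients are obtained. The paper does not derive the key identity or the snake property at all; it cites two lemmas from an unpublished companion paper (\cite[Lemmas 2 and 3]{Wang}), whereas you prove everything from first principles by representing each p-transition as left multiplication, $t_i(\pi)=\tau_i\circ\pi$, under the paper's composition convention --- a verification you correctly flag as the delicate point, and which checks out --- and you also make the cyclic closure explicit via $A_{M_n}=e$, which the paper leaves implicit in the citation. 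The second genuine difference is the parity claim: the paper argues that since $C_n\subset A_n$, every $d_{\mathcal{K}}(\pi_j,\pi_k)$ is even, hence by distance preservation every $d_{\mathcal{K}}(\sigma_j,\sigma_k)$ is even, which forces equal parities (using that the Kendall distance between permutations of opposite parity is odd); you instead apply the sign homomorphism directly to $\sigma_j=\pi_j\circ\rho$ to get $\mathrm{sign}(\sigma_j)=\mathrm{sign}(\rho)$ for all $j$. Your route is more elementary and self-contained --- it does not need the parity-of-Kendall-distance fact nor the external reference --- while the paper's version is shorter on the page precisely because it outsources the group-theoretic bookkeeping.
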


\begin{proof}
According to the \cite[Lemma 3]{Wang}, we have that $\hat{C}_n$ is an $(n,M_n,\mathcal{K})$-snake. Suppose $C_n=(\pi_0,\pi_1,...,\pi_{M_n-1})$. According to the \cite[Lemma 2]{Wang}, we can obtain that
\begin{equation}
\sigma_j\circ{\sigma_0}^{-1}=\pi_j\circ{\pi_0}^{-1}~\text{for all $j \in [M_n-1]$}.\label{Eq 26}
\end{equation}
Since the Kendall's $\tau$-metric is right invariant, and by $(\ref{Eq 26})$, then for any two distinct permutations $\sigma_j,\sigma_k \in \hat{C}_n$, we have that
\begin{align}
d_{\mathcal{K}}(\sigma_j,\sigma_k)=&d_{\mathcal{K}}(\sigma_j\circ({\sigma_0}^{-1}\circ\pi_0),\sigma_k\circ({\sigma_0}^{-1}\circ
\pi_0))\nonumber\\
=&d_{\mathcal{K}}(\sigma_j\circ{\sigma_0}^{-1},\sigma_k\circ{\sigma_0}^{-1})\nonumber\\
=&d_{\mathcal{K}}(\pi_j\circ{\pi_0}^{-1},\pi_k\circ{\pi_0}^{-1})\nonumber\\
=&d_{\mathcal{K}}(\pi_j,\pi_k).\label{Eq 27}
\end{align}
Furthermore, since $C_n\subset A_n$, then for two distinct permutations $\pi_j,\pi_k$, we can obtain that
\begin{equation}
d_{\mathcal{K}}(\pi_j,\pi_k)=2s\label{Eq 28}
\end{equation}
where $s$ is a positive integer. By $(\ref{Eq 27})$ and $(\ref{Eq 28})$, we can obtain that the parities of all the permutations of $\hat{C}_n$ are same.
\end{proof}

The following lemma gives the construction of a basic block which is useful for the construction of $\ell_{\infty}$-snakes by using $\mathcal{K}$-snakes.

\begin{lemma}
\label{Lm 7}
Let $\{a_{j}\}_{j=1}^{k}$ be a set of integers of the same parity, and let $\{b_{j}\}_{j=1}^{l}$ be also a set of integers of the same parity such that $\{a_{j}\}_{j=1}^{k}\cup\{b_{j}\}_{j=1}^{l}=[n]$. And we let $\sigma\triangleq[a_1,b_1,b_2,b_3,...,b_l,a_2,a_3...,a_k]$. Suppose we have an $(l+1,M_{l+1},\mathcal{K})$-snake in $A_{l+1}$ with one transition sequence $\mathcal{T}_{\mathcal{K},l+1}=(t_{\hat{i}(1)},t_{\hat{i}(2)}, ...,t_{\hat{i}(M_{l+1})})$ such that $t_{\hat{i}(M_{l+1})}=t_{l+1}$. Then, there exists a noncyclic $(n,M_{l+1},\ell_{\infty})$-snake starting with $\sigma$ and ending with the permutation $\pi=[b_1,b_2,..,b_l,a_1,a_2,...,a_k]$.
\end{lemma}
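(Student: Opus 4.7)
The plan is to apply the given $\mathcal{K}$-snake transition sequence $\mathcal{T}_{\mathcal{K},l+1}$ directly to $\sigma$ and argue that the resulting sequence of permutations is the desired noncyclic $\ell_{\infty}$-snake. Define $\sigma_0:=\sigma$ and $\sigma_j:=t_{\hat{i}(j)}(\sigma_{j-1})$ for $1\le j\le M_{l+1}-1$. Since every index $\hat{i}(j)$ lies in $\{2,\ldots,l+1\}$, the transition $t_{\hat{i}(j)}$ only rearranges positions $1,\ldots,l+1$, so the tail $a_2,a_3,\ldots,a_k$ is preserved throughout and the first $l+1$ positions of every $\sigma_j$ form a permutation of the fixed set $\{a_1,b_1,\ldots,b_l\}$, in which $a_1$ has parity opposite to that of every $b_r$.

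To identify the endpoint $\sigma_{M_{l+1}-1}$, I use the fact that each $t_i$ acts purely on positions and is independent of the values stored. In the original $\mathcal{K}$-snake in $A_{l+1}$, the transition $t_{\hat{i}(M_{l+1})}=t_{l+1}$ closes the cycle, so the composite effect of the first $M_{l+1}-1$ transitions is the cyclic left shift on positions $1,\ldots,l+1$. Applied to the prefix $[a_1,b_1,\ldots,b_l]$ of $\sigma$, this produces $[b_1,b_2,\ldots,b_l,a_1]$, so $\sigma_{M_{l+1}-1}=[b_1,b_2,\ldots,b_l,a_1,a_2,\ldots,a_k]=\pi$.

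To verify the $\ell_{\infty}$-snake property, fix a bijection $\phi:\{a_1,b_1,\ldots,b_l\}\to[l+1]$ and let $\rho_j\in S_{l+1}$ be the $\phi$-image of the first $l+1$ entries of $\sigma_j$. Then $(\rho_0,\ldots,\rho_{M_{l+1}-1})$ is exactly the $\mathcal{K}$-snake obtained by applying $\mathcal{T}_{\mathcal{K},l+1}$ to the initial permutation $\rho_0$, so by Lemma~\ref{Lm 6} its members are pairwise distinct and share a common parity in $S_{l+1}$. Since the tail of every $\sigma_j$ is the same, the parity of $\sigma_j$ in $S_n$ is determined by the parity of $\rho_j$ up to a common offset, so the $\sigma_j$ are likewise pairwise distinct and of a common parity in $S_n$. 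For any $0\le i<j\le M_{l+1}-1$, Lemma~\ref{Lm 5}, applied with its $\{a_r\}$ taken to be $\{b_1,\ldots,b_l\}$, its $x$ taken to be $a_1$, and its fixed suffix taken to be $a_2,\ldots,a_k$, now yields $d_{\infty}(\sigma_i,\sigma_j)\ge 2$.

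The main obstacle is the parity-transport step. Lemma~\ref{Lm 6} is stated for a $\mathcal{K}$-snake living in the full $A_n$, whereas the snake in hand lives in $A_{l+1}$ and acts only on a prefix of our length-$n$ permutations. One must check that each push-to-the-top $t_i$ with $i\le l+1$ contributes the same cycle-sign whether viewed in $S_{l+1}$ or in $S_n$, so that common parity of the sub-permutations $\rho_j$ forces common parity of the full $\sigma_j$; once this observation is in place, the application of Lemma~\ref{Lm 5} is immediate and the construction is complete.
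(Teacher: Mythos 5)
Your proposal is correct and follows essentially the same route as the paper: apply $\mathcal{T}_{\mathcal{K},l+1}$ with its final transition $t_{l+1}$ removed to $\sigma$, observe the tail $a_2,\ldots,a_k$ is untouched, invoke Lemma~\ref{Lm 6} to get pairwise distinctness and common parity, invoke Lemma~\ref{Lm 5} for $d_{\infty}\ge 2$, and use the deleted $t_{l+1}$ to identify the endpoint $\pi$. The only difference is presentational: the paper applies Lemma~\ref{Lm 6} to the prefix directly without your relabeling bijection $\phi$ or your explicit remark that each $t_i$ ($i\le l+1$) has the same sign in $S_{l+1}$ and $S_n$, so your treatment of the parity-transport step is, if anything, more careful than the paper's.
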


\begin{proof}
We let $C_{\hat{\mathcal{T}}_{l+1}}^{\sigma,\pi}$ be the claimed noncyclic $\ell_{\infty}$-snake, where $C_{\hat{\mathcal{T}}_{l+1}}^{\sigma,\pi}=(\sigma_0,\sigma_1,...,\sigma_{M_{l+1}-1})$ and $\hat{\mathcal{T}}_{l+1}=(t_{\alpha(1)},t_{\alpha(2)},...,\\ t_{\alpha(M_{l+1}-1)})$.

Firstly, we denote by $\sigma_0\triangleq\sigma$. Next, we construct the transition sequence $\hat{\mathcal{T}}_{l+1}$. We let
\begin{equation}
t_{\alpha(j)}=t_{\hat{i}(j)}~\text{for all $j\in [M_{l+1}-1]$}.\label{Eq 29}
\end{equation}
By $(\ref{Eq 29})$ and its first permutation $\sigma_0$, we have that
\begin{equation}
\sigma_j=[\sigma_j(1),...,\sigma_{j}(l+1),a_2,a_3,...,a_k]\nonumber
\end{equation}
for all $j\in [M_{l+1}-1]$. By $(\ref{Eq 29})$ and Lemma \ref{Lm 6}, due to the $(l+1,M_{l+1},\mathcal{K})$-snake in $A_{l+1}$, we have that $C_{\hat{\mathcal{T}}_{l+1}}^{\sigma,\pi}$ is a noncyclic Gray code and the parities of all the permutations of $C_{\hat{\mathcal{T}}_{l+1}}^{\sigma,\pi}$ are same. Since $t_{\hat{i}(M_{l+1})}=t_{l+1}$, then we have $\pi=\sigma_{M_{l+1}-1}=[b_1,b_2,...,b_l,a_1,a_2,...,a_k]$.

Finally, for any two distinct permutations $\sigma_{j_1},\sigma_{j_2}\in C_{\hat{\mathcal{T}}_{l+1}}^{\sigma,\pi}$, since their parities are same and $\sigma_{j_i}=[\sigma_{j_i}(1),...,\sigma_{j_i}(l+1),a_2,a_3,...,a_k]$, for $i=1~\text{or}~2$, by Lemma \ref{Lm 5}, we have that
\begin{equation}
d_{\infty}(\sigma_{j_1},\sigma_{j_2})\geq 2.\nonumber
\end{equation}
Hence, we can obtain that $C_{\hat{\mathcal{T}}_{l+1}}^{\sigma,\pi}$ is a noncyclic $(n,M_{l+1},\ell_{\infty})$-snake starting with $\sigma$ and ending with the permutation $\pi=[b_1,b_2,..,b_l,a_1,a_2,...,a_k]$.
\end{proof}

When $n=4k+1$, $k\geq 1$, then $[n]$ has $2k$ even elements and $2k+1$ odd ones. In the following, by Lemma \ref{Lm 7}, we will give one construction of an $(n,M,\ell_{\infty})$-snake by using some $\mathcal{K}$-snakes. Firstly, we denote by $\sigma_0$ an initial permutation, where
\begin{equation}
\sigma_0=[1,2,4,...,4k,3,5...,4k+1].\nonumber
\end{equation}
And we construct a transition sequence, denoted by $\mathcal{T}_C=(t_{i(1)},t_{i(2)},...,t_{i(M)})$. By the transition sequence $\mathcal{T}_C$ and the initial permutation $\sigma_0$, we can get a permutation sequence, denoted by $C_{\mathcal{T}_C}^{\sigma_0}=(\sigma_0,\sigma_1,...,\sigma_{M-1})$. Given a $(2k+1,M_{2k+1},\mathcal{K})$-snake in $A_{2k+1}$ with one transition sequence $(t_{\alpha(1)},t_{\alpha(2)},...,t_{\alpha(M_{2k+1})})$ and $t_{\alpha(M_{2k+1})}=t_{2k+1}$, by Lemma \ref{Lm 7}, we take a noncyclic $(n,M_{2k+1},\ell_{\infty})$-snake by using the following transition sequence
\begin{equation}
\hat{\mathcal{T}}_{2k+1}=(t_{\alpha(1)},t_{\alpha(2)},...,t_{\alpha(M_{2k+1}-1)}).\label{Eq 30}
\end{equation}
Moreover, by Lemma \ref{Lm 1}, we can obtain a cyclic and complete $(2k+1)$-RMGC by using the following transition sequence
\begin{equation}
\mathcal{T}_{2k+1}=(t_{i_{2k+1}(1)},t_{i_{2k+1}(2)},...,t_{i_{2k+1}((2k+1)!)}).\label{Eq 31}
\end{equation}
By $(\ref{Eq 30})-(\ref{Eq 31})$, we construct the transition sequence $\mathcal{T}_C=(t_{i(1)},t_{i(2)},...,t_{i(M)})$ such that $M=M_{2k+1}(2k+1)!$. We let
\begin{equation}
t_{i(j)}=t_{{\alpha}(j-l\cdot M_{2k+1})}\label{Eq 32}
\end{equation}
for all $l\cdot M_{2k+1}+1\leq j\leq (l+1)\cdot M_{2k+1}-1$ and $0\leq l\leq(2k+1)!-1$, and we let
\begin{equation}
t_{i(j\cdot M_{2k+1})}=t_{i_{2k+1}(j)+2k} \label{Eq 33}
\end{equation}
for all $1\leq j\leq (2k+1)!$.

Secondly, by $(\ref{Eq 32})-(\ref{Eq 33})$ and the initial permutation $\sigma_0$, we obtain the permutation sequence $\sigma_j=t_{i(j)}(\sigma_{j-1})$ for all $1\leq j\leq M_{2k+1}(2k+1)!-1$.

Finally, in the following theorem, we will prove that $C_{\mathcal{T}_C}^{\sigma_0}$ is a $(4k+1,M_{2k+1}(2k+1)!,\ell_{\infty})$-snake.

Similarly, when $n=4k+3, k\geq1$, then $[n]$ has $2k+1$ even elements and $2k+2$ odd ones. In the following, by Lemma \ref{Lm 7}, we will give another construction of an $(n,\hat{M},\ell_{\infty})$-snake by using some $\mathcal{K}$-snakes. Firstly, we denote by $\hat{\sigma}_0$ an initial permutation, where
\begin{equation}
\hat{\sigma}_0=[2,1,3,5,...,4k+3,4,6...,4k+2].\label{Eq 40}
\end{equation}
And we construct another transition sequence, denoted by $\mathcal{T}_{\hat{C}}=(t_{\hat{i}(1)},t_{\hat{i}(2)},...,t_{\hat{i}(\hat{M})})$. By the transition sequence $\mathcal{T}_{\hat{C}}$ and the initial permutation $\hat{\sigma}_0$, we can get a permutation sequence, denoted by $\hat{C}_{\mathcal{T}_{\hat{C}}}^{\hat{\sigma}_0}=(\hat{\sigma}_0,\hat{\sigma}_1,...,\hat{\sigma}_{\hat{M}-1})$.

Given a $(2k+3,M_{2k+3},\mathcal{K})$-snake in $A_{2k+3}$ with one transition sequence $(t_{\beta(1)},t_{\beta(2)},...,t_{\beta(M_{2k+3})})$ and $t_{\beta(M_{2k+3})}=t_{2k+3}$, by Lemma \ref{Lm 7}, we take a noncyclic $(4k+3,M_{2k+3},\ell_{\infty})$-snake by using the following transition sequence
\begin{equation}
\hat{\mathcal{T}}_{2k+3}=(t_{\beta(1)},t_{\beta(2)},...,t_{\beta(M_{2k+3}-1)}).\label{Eq 34}
\end{equation}
Moreover, by Lemma \ref{Lm 1}, we can obtain a cyclic and complete $(2k+1)$-RMGC by using the following transition sequence
\begin{equation}
\mathcal{T}_{2k+1}=(t_{i_{2k+1}(1)},t_{i_{2k+1}(2)},...,t_{i_{2k+1}((2k+1)!)}).\label{Eq 35}
\end{equation}
By $(\ref{Eq 34})-(\ref{Eq 35})$, we construct the transition sequence $\mathcal{T}_{\hat{C}}=(t_{\hat{i}(1)},t_{\hat{i}(2)},...,t_{\hat{i}(\hat{M})})$ such that $\hat{M}=M_{2k+3}(2k+1)!$. We let
\begin{equation}
t_{\hat{i}(j)}=t_{\beta(j-l\cdot M_{2k+3})} \label{Eq 36}
\end{equation}
for all $l\cdot M_{2k+3}+1\leq j\leq (l+1)\cdot M_{2k+3}-1$ and $0\leq l\leq(2k+1)!-1$, and we let
\begin{equation}
t_{\hat{i}(j\cdot M_{2k+3})}=t_{i_{2k+1}(j)+2k+2}\label{Eq 37}
\end{equation}
for all $1\leq j\leq (2k+1)!$.

Secondly, by $(\ref{Eq 36})-(\ref{Eq 37})$ and its first permutation $\hat{\sigma}_0$, we obtain the permutation sequence $\hat{\sigma}_j=t_{\hat{i}(j)}(\hat{\sigma}_{j-1})$ for all $1\leq j\leq M_{2k+3}(2k+1)!-1$.

Finally, in the following theorem, we will also prove that $\hat{C}_{\mathcal{T}_{\hat{C}}}^{\hat{\sigma}_0}$ is a $(4k+3,M_{2k+3}(2k+1)!,\ell_{\infty})$-snake.

\begin{theorem}
\label{Th 2}
When $n=4k+1$ and $k\geq 1$, given a $(2k+1,M_{2k+1},\mathcal{K})$-snake in $A_{2k+1}$, there exists an $(n,M,\ell_{\infty})$-snake of size $M=M_{2k+1}\cdot(2k+1)!$. Moreover, when $n=4k+3$ and $k\geq 1$, given a $(2k+3,M_{2k+3},\mathcal{K})$-snake in $A_{2k+3}$, there exists an $(n,\hat{M},\ell_{\infty})$-snake of size $\hat{M}=M_{2k+3}\cdot(2k+1)!$.
\end{theorem}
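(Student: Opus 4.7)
My plan is to prove the $n = 4k+1$ case in full; the $n = 4k+3$ case will follow by an entirely analogous argument with the roles of odd and even elements interchanged. The heart of the proof is a \emph{block invariant}: for every $l \in \{0, 1, \ldots, (2k+1)!-1\}$, the block-start permutation $\sigma_{l M_{2k+1}}$ has (i) some odd element $c_l$ at position 1, (ii) the evens $2, 4, \ldots, 4k$ in canonical order at positions 2 through $2k+1$, and (iii) the remaining $2k$ odd elements (in some order) at positions $2k+2$ through $4k+1$. The base case $l = 0$ is immediate from the definition of $\sigma_0$. For the inductive step, Lemma \ref{Lm 7} (applied with odd $\{a_j\}$ and even $\{b_j\}$) shows that $\hat{\mathcal{T}}_{2k+1}$ carries $\sigma_{l M_{2k+1}}$ to $[2, 4, \ldots, 4k, c_l, \text{tail}]$, and then the trailing push $t_{i_{2k+1}(l+1)+2k}$, whose source position $j$ satisfies $j \in \{2k+2, \ldots, 4k+1\}$, rigidly shifts positions 1 through $j-1$ one step rightward. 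This leaves positions 2 through $2k+1$ occupied by the canonical evens and installs a fresh odd at position 1.

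Within each block, Lemma \ref{Lm 7} gives directly that the $M_{2k+1}$ consecutive permutations form a noncyclic $\ell_\infty$-snake, so within-block pairs satisfy $d_\infty \geq 2$. For the cross-block bound, I introduce the \emph{odd ensemble} $\sigma^{\mathrm{odd}}_l \triangleq [\sigma_{l M_{2k+1}}(1), \sigma_{l M_{2k+1}}(2k+2), \ldots, \sigma_{l M_{2k+1}}(4k+1)]$, a permutation of the $2k+1$ odd values. The crucial identity to establish is $\sigma^{\mathrm{odd}}_{l+1} = t_{i_{2k+1}(l+1)}(\sigma^{\mathrm{odd}}_l)$, which shows that the ensembles realize exactly the cyclic complete $(2k+1)$-RMGC $C_{\mathcal{T}_{2k+1}}$ and are therefore $(2k+1)!$ pairwise distinct permutations of $\{1, 3, \ldots, 4k+1\}$. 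Given this, for any $l \neq l'$, since the first ensemble coordinate is the unique odd missing from the tail, distinct ensembles must disagree at some tail coordinate; this corresponds to a position $p \in \{2k+2, \ldots, 4k+1\}$ at which every permutation of block $l$ disagrees with every permutation of block $l'$ (these positions are frozen throughout each block). Both disagreeing values are odd, so $|\sigma_i(p) - \sigma_j(p)|$ is a positive even integer, giving $d_\infty \geq 2$. Cyclicity $\sigma_M = \sigma_0$ then follows immediately from $\sigma^{\mathrm{odd}}_{(2k+1)!} = \sigma^{\mathrm{odd}}_0$ together with the canonical-even part of the block invariant.

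The $n = 4k+3$ case proceeds identically after swapping the roles of odd and even: the block invariant then says position 1 is even, positions 2 through $2k+3$ hold the $2k+2$ odds in canonical order, and positions $2k+4$ through $4k+3$ hold the remaining $2k$ evens; the even ensemble lives on positions $\{1\} \cup \{2k+4, \ldots, 4k+3\}$, and the same $\mathcal{T}_{2k+1}$ drives it through all $(2k+1)!$ permutations of the evens. The main obstacle I anticipate is verifying the push-to-ensemble correspondence $\sigma^{\mathrm{odd}}_{l+1} = t_{i_{2k+1}(l+1)}(\sigma^{\mathrm{odd}}_l)$. This is subtle because a push $t_j$ in $S_n$ acts nonlocally, cascading across all positions $1, \ldots, j$, while the ensemble is supported on the nonadjacent position set $\{1\} \cup \{2k+2, \ldots, 4k+1\}$; one must carefully check that the intervening canonical-even block passes through the push as a rigid shift, so that the induced action on the ensemble coordinates is precisely the push-to-the-top $t_{i_{2k+1}(l+1)}$ in $S_{2k+1}$.
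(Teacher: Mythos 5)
Your proposal is correct and takes essentially the same route as the paper's proof: within-block pairs are handled by Lemma \ref{Lm 7}, cross-block pairs by noting that the frozen tail positions carry distinct all-odd (resp.\ all-even) tuples generated by the cyclic and complete $(2k+1)$-RMGC, and cyclicity follows from the cyclicity of that RMGC. The only difference is one of rigor: your block invariant and the push-to-ensemble identity $\sigma^{\mathrm{odd}}_{l+1}=t_{i_{2k+1}(l+1)}(\sigma^{\mathrm{odd}}_l)$ spell out (correctly) the correspondence that the paper merely asserts when it says the outer transitions realize a $(2k+1)$-RMGC on the odd elements.
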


\begin{proof}
When $n=4k+1$, we will prove that the above $C_{\mathcal{T}_C}^{\sigma_0}$ is an $\ell_{\infty}$-snake. Since $\sigma_0=[1,2,4,...,4k,3,5,...,4k+1]$, by the construction of this $\ell_{\infty}$-snake, we have that for all $0\leq l\leq (2k+1)!-1$, $\sigma_{l\cdot M_{2k+1}}$ satisfies the condition of Lemma \ref{Lm 7}. Hence, by the construction of $C_{\mathcal{T}_C}^{\sigma_0}$ and Lemma \ref{Lm 7}, for all $0\leq l\leq (2k+1)!-1~\text{and}~0\leq j<k\leq M_{2k+1}-1$, we have
\begin{equation}
d_{\infty}(\sigma_{l\cdot M_{2k+1}+j},\sigma_{l\cdot M_{2k+1}+k})\geq 2.\nonumber
\end{equation}

Furthermore, for $l,\tilde{l}\in [(2k+1)!]~\text{and}~l<\tilde{l}$, since the code generated by the transition sequence $\mathcal{T}_{2k+1}=(t_{i_{2k+1}(1)},t_{i_{2k+1}(2)},...,\\ t_{i_{2k+1}((2k+1)!)})$ is a cyclic and complete $(2k+1)$-RMGC code, we are assured that for all $0\leq j,\tilde{j}\leq M_{2k+1}-1$, the last $2k$ elements of both $\sigma_{(l-1)M_{2k+1}+j}$ and $\sigma_{(\tilde{l}-1)M_{2k+1}+\tilde{j}}$ are all odd and represent two distinct permutations. Hence, we have that
\begin{equation}
d_{\infty}(\sigma_{(l-1)M_{2k+1}+j},\sigma_{(\tilde{l}-1)M_{2k+1}+\tilde{j}})\geq 2.\nonumber
\end{equation}

Finally, we note that $t_{i(M_{2k+1}(2k+1)!)}(\sigma_{M_{2k+1}(2k+1)!-1})=\sigma_0$, since the code generated by the transition sequence $\mathcal{T}_{2k+1}$ is cyclic. Therefore, $C_{\mathcal{T}_C}^{\sigma_0}$ is an $(n,M,\ell_{\infty})$-snake of size $M=M_{2k+1}(2k+1)!$.

Similarly, when $n=4k+3$, according to the construction of $\hat{C}_{\mathcal{T}_{\hat{C}}}^{\hat{\sigma}_0}$, we can obtain that $\hat{C}_{\mathcal{T}_{\hat{C}}}^{\hat{\sigma}_0}$ is a $(4k+3,\hat{M},\ell_{\infty})$-snake of size $\hat{M}=M_{2k+3}(2k+1)!$.
\end{proof}

\begin{corollary}
\label{Cor1}
If $n=4k+1$ and $k\geq 2$, there exists an $(n,M,\ell_{\infty})$-snake of size $M=\big(\frac{(2k+1)!}{2}-2k+1\big)\cdot(2k+1)!$.
Moreover, if $n=4k-1$ and $k\geq 2$, there also exists an $(n,\hat{M},\ell_{\infty})$-snake of size $\hat{M}=\big(\frac{(2k+1)!}{2}-2k+1\big)\cdot(2k-1)!$
\end{corollary}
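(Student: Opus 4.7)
The plan is to read Corollary \ref{Cor1} as a direct specialization of Theorem \ref{Th 2}, with the $\mathcal{K}$-snake input supplied by Lemma \ref{LM42}. Both cases of the corollary then reduce to plugging in; the only care needed is a one-step re-indexing of the parameter in the second case so that the $\mathcal{K}$-snake produced by Lemma \ref{LM42} lives in the correct alternating group, and a rotation of the cyclic $\mathcal{K}$-snake so that $t_{2k+1}$ appears as its last transition, as required by the construction underlying Theorem \ref{Th 2}.

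For the first case $n=4k+1$ with $k\geq 2$, I would apply Lemma \ref{LM42} with its free parameter equal to $k$. Since $k\geq 2$, this yields a $(2k+1,M_{2k+1},\mathcal{K})$-snake in $A_{2k+1}$ of size $M_{2k+1}=\frac{(2k+1)!}{2}-2k+1$ whose cyclic transition sequence contains $t_{2k+1}$; by rotating the starting permutation I may assume $t_{2k+1}$ is the last transition. Feeding this $\mathcal{K}$-snake into the first half of Theorem \ref{Th 2} yields an $(n,M,\ell_{\infty})$-snake of size
\[
M=M_{2k+1}\cdot(2k+1)!=\Big(\tfrac{(2k+1)!}{2}-2k+1\Big)\cdot(2k+1)!,
\]
which is the claimed value.

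For the second case $n=4k-1$ with $k\geq 2$, I would rewrite $n=4k'+3$ with $k'=k-1\geq 1$, so that the second half of Theorem \ref{Th 2} becomes directly applicable. That version asks for a $(2k'+3,M_{2k'+3},\mathcal{K})$-snake in $A_{2k'+3}$, i.e., for a $(2k+1,M_{2k+1},\mathcal{K})$-snake in $A_{2k+1}$. Invoking Lemma \ref{LM42} again with parameter $k$ (still valid, since $k\geq 2$) supplies such a snake of size $\frac{(2k+1)!}{2}-2k+1$ with $t_{2k+1}$ in its transition sequence. Theorem \ref{Th 2} then delivers an $(n,\hat{M},\ell_{\infty})$-snake of size
\[
\hat{M}=M_{2k+1}\cdot(2k'+1)!=\Big(\tfrac{(2k+1)!}{2}-2k+1\Big)\cdot(2k-1)!,
\]
matching the corollary.

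I do not anticipate a serious obstacle: the argument is essentially bookkeeping once Theorem \ref{Th 2} and Lemma \ref{LM42} are both in hand. The only point requiring attention is the re-indexing $k'=k-1$ in the second case, which is precisely what turns the $(2k'+1)!$ factor produced by Theorem \ref{Th 2} into the $(2k-1)!$ factor that appears in the claimed size; the hypothesis $k\geq 2$ of the corollary is exactly what is needed to keep the Lemma \ref{LM42} parameter at least $2$ in both invocations.
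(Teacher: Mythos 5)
Your proposal is correct and follows exactly the paper's own (one-line) proof, which simply cites Theorem \ref{Th 2} together with Lemma \ref{LM42}. You merely make explicit the bookkeeping the paper leaves implicit --- the re-indexing $k'=k-1$ for the $n=4k-1$ case and the rotation of the cyclic $\mathcal{K}$-snake so that $t_{2k+1}$ is its last transition --- both of which are valid and indeed needed.
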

\begin{proof}
According to Theorem \ref{Th 2} and Lemma \ref{LM42}, we can prove this corollary.
\end{proof}

\section{Examples of $\ell_{\infty}$-snakes}
\label{sec4}
\subsection{One example of $\ell_{\infty}$-snakes by using cyclic and complete RMGCs}
In this subsection, we give an example of $\ell_{\infty}$-snakes which is constructed by using cyclic and complete RMGCs. Consider $n=6$, then we have that $p=q=3$. By Lemma \ref{Lm 3}, firstly we will construct two kinds of noncyclic $\ell_{\infty}$-snakes which are basic building blocks for $\ell_{\infty}$-snakes. Secondly we will give a cyclic $\ell_{\infty}$-snake.

Now, we will start this example with an initial permutation, denoted by $\sigma_0$. By $(\ref{Eq 13})$, we have that
\begin{equation}
\sigma_0=[1,4,2,6,3,5].\nonumber
\end{equation}
In order to construct the blocks, we need one transition sequence of a cyclic and complete $3$-RMGC, i.e, $\mathcal{T}_3=(t_3,t_3,t_2,t_3,t_3,t_2)$. By Lemma \ref{Lm 3}, we can obtain two transition sequences $\mathcal{T}_{C}$ and $\mathcal{T}_{\hat{C}}$, where
\begin{equation}
\mathcal{T}_{C}=(t_3,t_3,t_4,t_2,t_3,t_3,t_2,t_3)\nonumber
\end{equation}
 and
\begin{equation}
\mathcal{T}_{\hat{C}}=(t_3,t_3,t_4,t_3,t_3,t_2,t_3,t_3).\nonumber
\end{equation}
Next, we will give two noncyclic $(6,3!+3,\ell_{\infty})$-snakes by the two transition sequences and $\sigma_0$. Then, one noncyclic $(6,3!+3,\ell_{\infty})$-snake is constructed by $\mathcal{T}_{C}$ and $\sigma_0$, which is depicted by Figure 1 as follows
\begin{figure}[H]
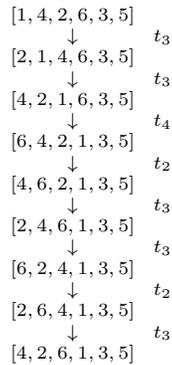

\label{Fig 1}
\addtolength{\tabcolsep}{-5pt}
\captionsetup{{font}={scriptsize}}
\scriptsize

\begin{center}
\begin{tabular}{cc}
$[1,4,2,6,3,5]$& \\
$\downarrow$&$~~t_3$\\
$[2,1,4,6,3,5]$&\\
$\downarrow$&$~~t_3$\\
$[4,2,1,6,3,5]$&\\
$\downarrow$&$~~t_4$\\
$[6,4,2,1,3,5]$&\\
$\downarrow$&$~~t_2$\\
$[4,6,2,1,3,5]$&\\
$\downarrow$&$~~t_3$\\
$[2,4,6,1,3,5]$&\\
$\downarrow$&$~~t_3$\\
$[6,2,4,1,3,5]$&\\
$\downarrow$&$~~t_2$\\
$[2,6,4,1,3,5]$&\\
$\downarrow$&$~~t_3$\\
$[4,2,6,1,3,5]$&
\end{tabular}
\end{center}
\caption*{Fig. 1.~ A noncyclic $(6,3!+3,\ell_{\infty})$-snake constructed by $\mathcal{T}_{C}$ and $\sigma_0$.}
\end{figure}

\noindent
Another noncyclic $(6,3!+3,\ell_{\infty})$-snake is constructed by $\mathcal{T}_{\hat{C}}$ and $\sigma_0$, which is depicted by Figure 2 as follows

\begin{figure}[H]
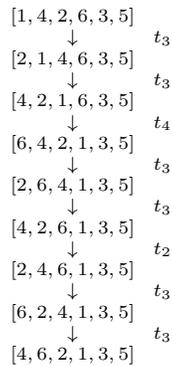

\addtolength{\tabcolsep}{-5pt}
\captionsetup{{font}={scriptsize}}
\scriptsize

\begin{center}
\begin{tabular}{cc}
$[1,4,2,6,3,5]$& \\
$\downarrow$&$~~t_3$\\
$[2,1,4,6,3,5]$&\\
$\downarrow$&$~~t_3$\\
$[4,2,1,6,3,5]$&\\
$\downarrow$&$~~t_4$\\
$[6,4,2,1,3,5]$&\\
$\downarrow$&$~~t_3$\\
$[2,6,4,1,3,5]$&\\
$\downarrow$&$~~t_3$\\
$[4,2,6,1,3,5]$&\\
$\downarrow$&$~~t_2$\\
$[2,4,6,1,3,5]$&\\
$\downarrow$&$~~t_3$\\
$[6,2,4,1,3,5]$&\\
$\downarrow$&$~~t_3$\\
$[4,6,2,1,3,5]$&
\end{tabular}
\end{center}
\caption*{Fig. 2.~ A noncyclic $(6,3!+3,\ell_{\infty})$-snake constructed by $\mathcal{T}_{\hat{C}}$ and $\sigma_0$.}
\label{Fig 2}
\end{figure}
\noindent
By the two kinds of basic noncyclic $\ell_{\infty}$-snakes, we have a cyclic $(6,3!(3!+3),\ell_{\infty})$-snake described in Figure 3 as follows
\begin{figure}[H]
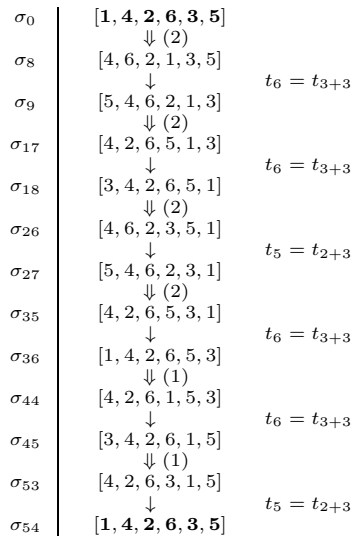


\captionsetup{{font}={scriptsize}}
\scriptsize
\begin{center}
\begin{tabular}{c|cc}
$\sigma_{0}$ &$~~~\bf{[1,4,2,6,3,5]}$& \\
$~$           &$~~~~\Downarrow(2)$& \\
$\sigma_{8}$ &$~~~[4,6,2,1,3,5]$& \\
$~$           &$\downarrow$& $~t_6=t_{3+3}$\\
$\sigma_{9}$ &$~~~[5,4,6,2,1,3]$& \\
$~$           &$~~~~\Downarrow(2)$& \\
$\sigma_{17}$&$~~~[4,2,6,5,1,3]$& \\
$~$           &$\downarrow$& $~t_6=t_{3+3}$\\
$\sigma_{18}$&$~~~[3,4,2,6,5,1]$& \\
$~$           &$~~~~\Downarrow(2)$& \\
$\sigma_{26}$&$~~~[4,6,2,3,5,1]$& \\
$~$           &$\downarrow$& $~t_5=t_{2+3}$\\
$\sigma_{27}$&$~~~[5,4,6,2,3,1]$& \\
$~$           &$~~~~\Downarrow(2)$& \\
$\sigma_{35}$&$~~~[4,2,6,5,3,1]$& \\
$~$           &$\downarrow$& $~t_{6}=t_{3+3}$\\
$\sigma_{36}$&$~~~[1,4,2,6,5,3]$& \\
$~$           &$~~~~\Downarrow(1)$& \\
$\sigma_{44}$&$~~~[4,2,6,1,5,3]$& \\
$~$           &$\downarrow$& $~t_{6}=t_{3+3}$\\
$\sigma_{45}$&$~~~[3,4,2,6,1,5]$& \\
$~$           &$~~~~\Downarrow(1)$& \\
$\sigma_{53}$&$~~~[4,2,6,3,1,5]$& \\
$~$           &$\downarrow$&  $~t_{5}=t_{2+3}$\\
$\sigma_{54}$&$~~~\bf{[1,4,2,6,3,5]}$&
\end{tabular}
\end{center}
\caption*{Fig. 3.~ A $(6,54,\ell_{\infty})$-snake obtained by using a cyclic and complete $3$-RMGC.}
\label{Fig 3}
\end{figure}
\noindent
In Figure 3, ``$\Downarrow(1)$'' stands for an omitted transition sequence $\mathcal{T}_{C}=(t_3,t_3,t_4,t_2,t_3,t_3,t_2,t_3)$. While ``$\Downarrow(2)$'' stands for another omitted transition sequence $\mathcal{T}_{\hat{C}}=(t_3,t_3,t_4,t_3,t_3,t_2,t_3,t_3).$ Hence, when $n=6$, by using one cyclic and complete $3$-RMGC, we can construct a cyclic $\ell_{\infty}$-snake of size $54$.

\subsection{One example of $\ell_{\infty}$-snakes by using $\mathcal{K}$-snakes}
In this subsection, we present an example of $\ell_{\infty}$-snakes constructed by using $\mathcal{K}$-snakes. Consider $n=7$, then we have $4$ odd elements and $3$ even ones in $\{1,2,3,4,5,6,7\}$. Moreover, Horovitz and Etzion \cite{Horvitz} gave a $(5,57,\mathcal{K})$-snake in $A_5$ with one transition sequence, denoted by $\mathcal{T}_{\mathcal{K},5}=(\hat{\mathcal{T}},\hat{\mathcal{T}},\hat{\mathcal{T}})$, where $\hat{\mathcal{T}}$ is a partial transition sequence of $\mathcal{T}_{\mathcal{K},5}$ and  $\hat{\mathcal{T}}=(t_3,t_3,t_5,t_3,t_3,t_5,t_3,t_5,t_5,t_3,t_3,t_5,t_3,t_3,t_5,t_3,t_5,t_5,t_5)$.

By Theorem \ref{Th 2}, given this $(5,M_5,\mathcal{K})$-snake of size $57$ in \cite{Horvitz}, we can obtain a $(7,M,\ell_{\infty})$-snake of size $M=342$.

Firstly, we will start this example with one initial permutation, denoted by $\hat{\sigma}_0$. By $(\ref{Eq 40})$, we have that
\begin{equation}
\hat{\sigma}_0=[2,1,3,5,7,4,6].\nonumber
\end{equation}
By Lemma \ref{Lm 7} and $\mathcal{T}_{\mathcal{K},5}$, we can construct one transition sequence, denoted by $\hat{\mathcal{T}}_{\mathcal{K},5}$, where
\begin{align*}
\hat{\mathcal{T}}_{\mathcal{K},5}=(\hat{\mathcal{T}},\hat{\mathcal{T}},t_3,t_3,t_5,t_3,t_3,t_5,t_3,
t_5,t_5,t_3,t_3,t_5,t_3,t_3,t_5,t_3,t_5,t_5).
\end{align*}
Hence, we will give one noncyclic $(7,57,\ell_{\infty})$-snake by the transition sequence $\hat{\mathcal{T}}_{\mathcal{K},5}$ and $\hat{\sigma}_0$. Then, the noncyclic $(7,57,\ell_{\infty})$-snake determined by $\hat{\mathcal{T}}_{\mathcal{K},5}$ and $\hat{\sigma}_0$ is depicted by
Figure 4 as follows
\begin{figure}[H]
\addtolength{\tabcolsep}{-5pt}
\captionsetup{{font}={scriptsize}}
\scriptsize
\begin{center}
\begin{tabular}{c|c|c|c|c|c|c|c|c|c|c|c|c|c|c|c|c|c|c|c|c|c|c|c|c|c|c|c|c|c|c|c|c|c|c|c|c|c|c|c|c|c|c|c|c|c|c|c|c|c|c|c|c|c|c|c|c}
2&3&1&7&3&1&5&3&2&7&3&2&1&3&2&5&3&7&1&2&7&1&5&7&1&3&7&2&5&7&2&1&7&2&3&7&5&1&2&5&1&3&5&1&7&5&2&3&5&2&1&5&2&7&5&3&1\\
1&2&3&1&7&3&1&5&3&2&7&3&2&1&3&2&5&3&7&1&2&7&1&5&7&1&3&7&2&5&7&2&1&7&2&3&7&5&1&2&5&1&3&5&1&7&5&2&3&5&2&1&5&2&7&5&3\\
3&1&2&3&1&7&3&1&5&3&2&7&3&2&1&3&2&5&3&7&1&2&7&1&5&7&1&3&7&2&5&7&2&1&7&2&3&7&5&1&2&5&1&3&5&1&7&5&2&3&5&2&1&5&2&7&5\\
5&5&5&2&2&2&7&7&1&5&5&5&7&7&7&1&1&2&5&3&3&3&2&2&2&5&5&1&3&3&3&5&5&5&1&1&2&3&7&7&7&2&2&2&3&3&1&7&7&7&3&3&3&1&1&2&7\\
7&7&7&5&5&5&2&2&7&1&1&1&5&5&5&7&7&1&2&5&5&5&3&3&3&2&2&5&1&1&1&3&3&3&5&5&1&2&3&3&3&7&7&7&2&2&3&1&1&1&7&7&7&3&3&1&2\\
4&4&4&4&4&4&4&4&4&4&4&4&4&4&4&4&4&4&4&4&4&4&4&4&4&4&4&4&4&4&4&4&4&4&4&4&4&4&4&4&4&4&4&4&4&4&4&4&4&4&4&4&4&4&4&4&4\\
6&6&6&6&6&6&6&6&6&6&6&6&6&6&6&6&6&6&6&6&6&6&6&6&6&6&6&6&6&6&6&6&6&6&6&6&6&6&6&6&6&6&6&6&6&6&6&6&6&6&6&6&6&6&6&6&6\\
\end{tabular}
\end{center}

\caption*{Fig. 4.~ A noncyclic $(7,57,\ell_{\infty})$-snake constructed by  $\hat{\mathcal{T}}_{\mathcal{K},5}$ and $\hat{\sigma}_0$.}
\label{Fig 4}
\end{figure}
Here, every column in Figure 4 represents one permutation over $\{1,2,3,4,5,6,7\}$. Moreover, we have a cyclic and complete $3$-RMGC with the transition sequence $\mathcal{T}_3=(t_3,t_3,t_2,t_3,t_3,t_2)$. By Figure 4 and $\mathcal{T}_3$, we can obtain a $(7,342,\ell_{\infty})$-snake depicted by Figure 5 as follows





\begin{figure}[H]

\captionsetup{{font}={scriptsize}}
\scriptsize
\begin{center}
\begin{tabular}{c|cc}
$\sigma_{0}$ &$~~~\bf{[2,1,3,5,7,4,6]}$& \\
$~$           &$\Downarrow$& \\
$\sigma_{56}$ &$~~~[1,3,5,7,2,4,6]$& \\
$~$           &$\downarrow$& $~t_7=t_{3+4}$\\
$\sigma_{57}$ &$~~~[6,1,3,5,7,2,4]$& \\
$~$           &$\Downarrow$& \\
$\sigma_{113}$&$~~~[1,3,5,7,6,2,4]$& \\
$~$           &$\downarrow$& $~t_7=t_{3+4}$\\
$\sigma_{114}$&$~~~[4,1,3,5,7,6,2]$& \\
$~$           &$\Downarrow$& \\
$\sigma_{170}$&$~~~[1,3,5,7,4,6,2]$& \\
$~$           &$\downarrow$& $~t_6=t_{2+4}$\\
$\sigma_{171}$&$~~~[6,1,3,5,7,4,2]$& \\
$~$           &$\Downarrow$& \\
$\sigma_{227}$&$~~~[1,3,5,7,6,4,2]$& \\
$~$           &$\downarrow$& $~t_{7}=t_{3+4}$\\
$\sigma_{228}$&$~~~[2,1,3,5,7,6,4]$& \\
$~$           &$\Downarrow$& \\
$\sigma_{284}$&$~~~[1,3,5,7,2,6,4]$& \\
$~$           &$\downarrow$& $~t_{7}=t_{3+4}$\\
$\sigma_{285}$&$~~~[4,1,3,5,7,2,6]$& \\
$~$           &$\Downarrow$& \\
$\sigma_{341}$&$~~~[1,3,5,7,4,2,6]$& \\
$~$           &$\downarrow$&  $~t_{6}=t_{2+4}$\\
$\sigma_{342}$&$~~~\bf{[2,1,3,5,7,4,6]}$&
\end{tabular}
\end{center}
\caption*{Fig. 5.~ A $(7,342,\ell_{\infty})$-snake constructed by using a $\mathcal{K}$-snake in $A_5$.}
\label{Fig 5}
\end{figure}
In Figure 5, ``$\Downarrow$'' stands for an omitted transition sequence ${\hat{\mathcal{T}}}_{\mathcal{K},5}$. Therefore, when $n=7$, by using $\mathcal{K}$-snakes in $A_n$, we can obtain a cyclic $(7,342,\ell_{\infty})$-snake.

\section{Comparison}
\label{sec5}
Yehezkeally and Schwartz \cite{Yehezkeally} presented one construction of an $(n,M_{n,0},\ell_{\infty})$-snake of size
\begin{equation}
M_{n,0}=\lceil\frac{n}{2}\rceil!(\lfloor\frac{n}{2}\rfloor+(\lfloor\frac{n}{2}\rfloor-1)!)
~\text{for all $n\geq 4$},\label{Eq 39}
\end{equation}
which is far less than the upper bound $\frac{n!}{2^{ \lfloor\frac{n}{2}\rfloor}}$ on the length of $\ell_{\infty}$-snakes in $S_n$.

Based on their construction of $\ell_{\infty}$-snakes, we proposed one construction of $\ell_{\infty}$-snakes by using cyclic and complete RMGCs. In this construction, we could obtain an $(n,M_{n,1},\ell_{\infty})$-snake of size
\begin{equation}
M_{n,1}=\lceil\frac{n}{2}\rceil!(\lfloor\frac{n}{2}\rfloor+(\lfloor\frac{n}{2}\rfloor)!)~\text{ for all $n\geq 6$}.\label{Eq 41}
\end{equation}
Hence, our $\ell_{\infty}$-snakes are better than Yehezkeally and Schwartz's ones for all $n\geq 6$.

Moreover, we also gave another construction of $\ell_{\infty}$-snakes by using $\mathcal{K}$-snakes. By Corollary \ref{Cor1}, we can obtain an $(n,M_{n,2},\ell_{\infty})$-snake, where
\begin{equation}
M_{n,2}=
\begin{cases}
\big(\frac{(2k+1)!}{2}-2k+1\big)\cdot (2k+1)!~~  \text{if $n=4k+1$},\\
\big(\frac{(2k+1)!}{2}-2k+1\big)\cdot (2k-1)! ~~ \text{if $n=4k-1$},
\end{cases}\label{Eq 42}
\end{equation}
for all $k\geq 2$.

By $(\ref{Eq 41})$ and $(\ref{Eq 42})$, when $n=4k+1~\text{or}~4k-1$, and $k\geq 2$, we have that $M_{n,2}> M_{n,1}$. Thus, we can obtain that
\begin{equation}
M_{n,2}> M_{n,1}>M_{n,0}\label{Eq 44}
\end{equation}
for all $n=4k+1~\text{or}~4k-1$, and $k\geq 2$. Hence, by $(\ref{Eq 44})$, the second construction is superior to the first one and Yehezkeally and Schwartz's one in some cases. For example, when $n=7$, by $(\ref{Eq 39})-(\ref{Eq 42})$, we have that $M_{7,0}=120,M_{7,1}=216,M_{7,2}=342$. Therefore, we can obtain that $M_{7,2}>M_{7,1}>M_{7,0}$.

\section{Conclusion}
\label{sec6}
Gray codes in $S_n$ under the $\ell_{\infty}$-metric by using the ``push-to-the-top'' operations are very useful in the framework of rank modulation for flash memories. These codes can protect against spike errors in the cells. In this paper, we gave two constructions of $\ell_{\infty}$-snakes which improve on Yehezkeally and Schwartz's construction. On the one hand, we presented one construction of $\ell_{\infty}$-snakes by using cyclic and complete RMGCs. On the other hand, we gave another construction of $\ell_{\infty}$-snakes by using $\mathcal{K}$-snakes. By our constructions, we can obtain longer $\ell_{\infty}$-snakes than Yehezkeally and Schwartz's ones.

\section*{Acknowledgment}
This research is supported by National Key Basic Research Problem of China (Grant No. 2013CB834204) and the National Natural Science Foundation of China (No. 61171082).


\begin{thebibliography}{}


\bibitem{Horvitz} M.~Horvitz and T.~Etzion, ``Constructions of snake-in-the-box codes for rank modulation,'' \textit{IEEE Trans. Inf. Theory}, vol. 60, no. 11, pp. 7016-7025, Nov. 2014.

\bibitem{Jiang1}A.~Jiang, R.~Mateescu, M.~Schwartz, and J.~Bruck, ``Rank modulation for flash memories,'' \textit{IEEE Trans. Inf. Theory}, vol. 55, no. 6, pp. 2659-2673, Jun. 2009.

\bibitem{Jiang2}A.~Jiang, M.~Schwartz, and J.~Bruck, ``Correcting charge-constrained errors in the rank-modulation scheme,'' \textit{IEEE Trans. Inf. Theory}, vol. 56, no. 5, pp. 2112-2120, May 2010.

\bibitem{Klve}T.~Kl{\o}ve, T.-T. Lin, S.-C. Tsai, and W.-G. Tzeng, ``Permutation arrays under the Chebyshev distance,'' \textit{IEEE Trans. Inf. Theory}, vol. 56, no. 6, pp. 2611-2617, Jun. 2010.

\bibitem{Wang}X.~Wang and F.-W. Fu, ``On the snake-in-the-box codes for rank modulation under Kendall's $\tau$-metric,'' unpublished.

\bibitem{Yehezkeally}Y.~Yehezkeally and M.~Schwartz, ``Snake-in-the-box codes for rank modulation,'' \textit{IEEE Trans. Inf. Theory}, vol. 58, no. 8, pp. 5471-5483, Aug. 2012.

\bibitem{Savage1}C.~D. Savage, ``A survey of combinatorial Gray codes,'' \textit{SIAM Rev.}, vol. 39, no. 4, pp. 605-629, Dec. 1997.

\bibitem{Savage2}C.~D. Savage and P.~Winkler, ``Monotone Gray codes and middle levels problem,''  \textit{J. Combinatorial Theory A}, vol. 70, no. 2, pp. 230-248, 1995.

\bibitem{Gad1}E.~En Gad, M.~Langberg, M.~Schwartz, and J.~Bruck, ``Generalized Gray codes for local rank modulation,'' \textit{IEEE Trans. Inf. Theory}, vol. 59, no. 10, pp. 6664-6673, Oct. 2013.

\bibitem{Gad2}E.~En Gad, M.~Langberg, M.~Schwartz, and J.~Bruck, ``Constant-weight Gray codes for local rank modulation,'' \textit{IEEE Trans. Inf. Theory}, vol. 57, no. 11, pp. 7431-7442, Nov. 2011.

\bibitem{Gray}F.~Gray, ``Pulse Code Communication,'' U. S. Patent $2~632~058$, Mar. 17, 1953.

\bibitem{Barg}A.~Barg and A.~Mazumdar, ``Codes in permutations and error correction for rank modulation,'' \textit{IEEE Trans. Inf. Theory}, vol. 56, no. 7, pp. 3158-3165, Jul. 2010.

\bibitem{Zhang}Y.~Zhang and G.~Ge, ``Snake-in-the-box codes for rank modulation under Kendall's $\tau$-metric,'' \textit{IEEE Trans. Inf. Theory}, vol. 62, no. 1, pp. 151-158, Jan. 2016.

\bibitem{Tamo}I.~Tamo and M.~Schwartz, ``Correcting limited-magnitude errors in the rank-modulation scheme,'' \textit{IEEE Trans. Inf. Theory}, vol. 56, no. 6, pp. 2551-2560, Jun. 2010.

\bibitem{Buzaglo}S.~Buzaglo and T.~Etzion, ``Bounds on the size of permutation codes with the Kendall $\tau$-metric,'' \textit{IEEE Trans. Inf. Theory}, vol. 61, no. 6, pp. 3241-3250, Jun. 2015.

\bibitem{Farnoud}F.~Farnoud, V.~Skachek, and O.~Milenkovic, ``Error-Correction in falsh memories via codes in the Ulam metric,'' \textit{IEEE Trans. Inf. Theory}, vol. 59, no. 5, pp. 3003-3020, May 2013.

\bibitem{Mazumdar}A.~Mazumdar, A.~Barg, and G.~Z\'{e}mor, ``Construction of rank modulation codes,'' in \textit{Proc. IEEE Int. Symp. Inform. Theory}, St. Petersburg, Russia, Aug. 2011, pp. 834-838.

\bibitem{Abbott}H.~L. Abbott and M.~Katchalski, ``On the snake in the box problem,'' \textit{J. Combinatorial Theory B}, vol. 45, no. 1, pp. 13-24, 1988.

\bibitem{Deza}M.~Deza and H.~Huang, ``Metrics on permutations, a survey,'' \textit{J. Comb. Inf. Sys. Sci.}, vol. 23, pp. 173-185, 1988.
\end{thebibliography}
\end{document}